\newtheorem{theorem}{Theorem}[section]
\newtheorem{corollary}[theorem]{Corollary}
\newtheorem{proof}[theorem]{Proof}
\journal{Arxiv}
\begin{document}

\begin{frontmatter}


\title{On the Stability of Linear Discrete-Time Fuzzy Systems}

\author{Gabriele Oliva$^*$, Stefano Panzieri$^*$ and Roberto Setola$^{\dagger}$}

\address{$^*$ University Roma Tre of Rome, Italy. oliva@dia.uniroma3.it, panzieri@uniroma3.it\\
$^{\dagger}$ University Campus Bio-Medico of Rome, Italy. r.setola@unicampus.it}

\begin{abstract}
In this paper the linear and stationary Discrete-time systems with state variables and dynamic coefficients represented by fuzzy numbers are studied, providing some stability criteria, and characterizing the bounds of the set of solutions in the case of positive systems.
\end{abstract}

\begin{keyword}
Fuzzy Systems \sep System Theory  \sep Discrete-Time Systems
\end{keyword}

\end{frontmatter}


\section{Introduction}
\label{sec_introduction}

Transforming a real-world system into a deterministic model (e.g., a set of dynamic equations) often leads to errors whose effect may greatly influence the process and can not be neglected.

If the nature of errors is random or probabilistic, then it is possible to adopt a stochastic framework; however if the underlying structure is not probabilistic, for instance due to subjective modeling choices, a different formalism is required \cite{Laksh:2003}.

Fuzzy theory, first introduced in 1965 by L.A. Zadeh \cite{zadeh}, appears the most natural choice to model complex systems affected by vagueness.

In the literature many approaches have been introduced: in \cite{fn1,fn2} fuzzy numbers are used to model uncertain quantities (e.g., different beliefs or opinions \cite{fedrizzi}), considering also the fuzzy extension of traditional arithmetic operations; a different approach is to synthesize controllers based on fuzzy rules \cite{fc1,fc2,fc3,fc4} in order to control complex real-world systems encoding the experience of human operators (e.g., a parking system for a wheeled mobile robot \cite{fc5} or the navigation of a mobile robot \cite{ulivi}); a different, more formal approach is to consider a dynamic fuzzy equation or system, both in the continuous \cite{Trig,Laksh:2003,Pearson,Seikkala,Kay} and discrete-time \cite{Laksh:2003} fashion.

The role of Fuzzy theory appears of particular practical utility when humans are directed involved; for instance in \cite{Oliva}, systems with crisp dynamic and fuzzy state have been studied, and the framework has been applied to critical infrastructure related problems, where the only information available is that provided by operators and stakeholders and is therefore linguistic and vague.

In this paper the stability of discrete-time linear and stationary fuzzy systems where both the coefficients and the state are described by means of fuzzy variables is studied, allowing to represent uncertainty to a greater extent with respect to the approach in \cite{Oliva}.

In the continuous time case, the approach of representing the fuzzy system as a family of Differential Inclusions \cite{Cellina,Filippov,Laksh:2003} has been successfully adopted in \cite{Hull,Diamond,Lak0}; in this paper we will follow a similar path, introducing the framework of \emph{Fuzzy Difference Inclusions} (FDI).

The paper is organized as follows: after some preliminary definitions, Section 2 introduces the Discrete-time Fuzzy Systems and Section 3 specifies them as a family FDIs, while Section 4 address the stability of fuzzy systems described as a family of FDIs; Section 5 further specifies the results for linear systems, while Section 6 characterizes the solution set of positive linear systems; finally some conclusive remarks are collected in Section 7.

\subsection{Preliminaries}
\label{sec_preliminaries}
In the following, vectors and vectorial functions will be represented by boldface letters while scalars and functions with scalar codomain will be represented by plain letters.
Moreover, to avoid confusions, ${\bf x}$ will denote a vector with fuzzy entries, while {\em crisp} (i.e., non-fuzzy) vectors will be denoted by ${\bf z}$.

Let $I_p$ denote the $p\times p$ identity matrix and let ${\bf c}_p$ be a vector of $p$ components, each equal to $c$.

Let $\mathbb{R},\mathbb{N}$ be the set of reals and integers, respectively and $\mathbb{R}^+,\mathbb{N}^+$ be the set of nonnegative real and integer numbers, respectively.
Let $\mathbb{K}_C^N$ be the space of nonempty compact convex subsets of $\mathbb{R}^N$.
Let $\mathbb{B}^N$ be the open unit ball in $\mathbb{R}^N$.

Given a space $\mathbb{X}$ and a particular distance $d$ defined over such a space, the Hausdorff separation and the Hausdorff metric for two sets $A,B \subset \mathbb{X}$ are given by: 
\begin{eqnarray}
\rho^*_{d,\mathbb{X}}(A,B)=sup\{d(a,b) : a\in A, b \in B\} \\
\rho_{d,\mathbb{X}}(A,B)=\max \{\rho^*_{d,\mathbb{X}}(A,B), \rho^*_{d,\mathbb{X}}(B,A) \}
\end{eqnarray}

In this work it will be considered the following distance in $\mathbb{R}^{N}$:
\begin{equation}
\label{distrn}
d_{\mathbb{R}^N}({\bf z_1},{\bf z_2})=\sum_{i=1}^N d_{\mathbb{R}}(z_{1i},z_{2i})
\end{equation}
where $d_{\mathbb{R}}(\cdot, \cdot)$ is the euclidean distance in $\mathbb{R}$ and ${\bf z_1},{\bf z_2}\in \mathbb{R}^N$.

If a matrix M is non-negative (non-positive), i.e., it has only non-negative (non-positive) entries, write $M\geq 0$ ($M\leq 0$).
If $B-A$ is a nonnegative matrix, write $B\geq A$.

\section{Fuzzy Discrete-time Systems and Inclusions}
The key idea of fuzzy theory is to extend traditional set theory by allowing an element to partially belong to a set.
Therefore, a fuzzy subset of $\mathbb{R}$ is defined by a \emph{membership function} $x:\mathbb{R} \rightarrow [0,1]$ which assigns to each point $p \in \mathbb{R}$ a grade of membership in the fuzzy set \cite{zadeh}.
Notice that, when dealing with fuzzy sets, it is very common to denote with the same symbol the fuzzy sets itself and the membership function that defines the set, because the membership function univocally determines the set and vice-versa.

Let the $p$-membership $x(p)$ be defined as the grade of membership of $p\in \mathbb{R}$ in the set $x$. In the following, the value assumed by a given fuzzy variable at time instant $k$ will be denoted by $x(k)$; in this case the $p$-membership of the set $x(k)$ is denoted by $x(p,k)$.

For each $\alpha\in(0,1]$, the {\em $\alpha$-level} set $[x]^\alpha$ of a fuzzy set is the subset of points $p \in\mathbb{R}$ with membership grade $x(p)\geq \alpha$.
The support $[x]^0$ of a fuzzy set is defined as the closure of the union of all its $\alpha$-level sets (see Figure \ref{fig:alfa}).

\begin{figure}[!ht]
\begin{center}
\includegraphics[scale=0.55]{./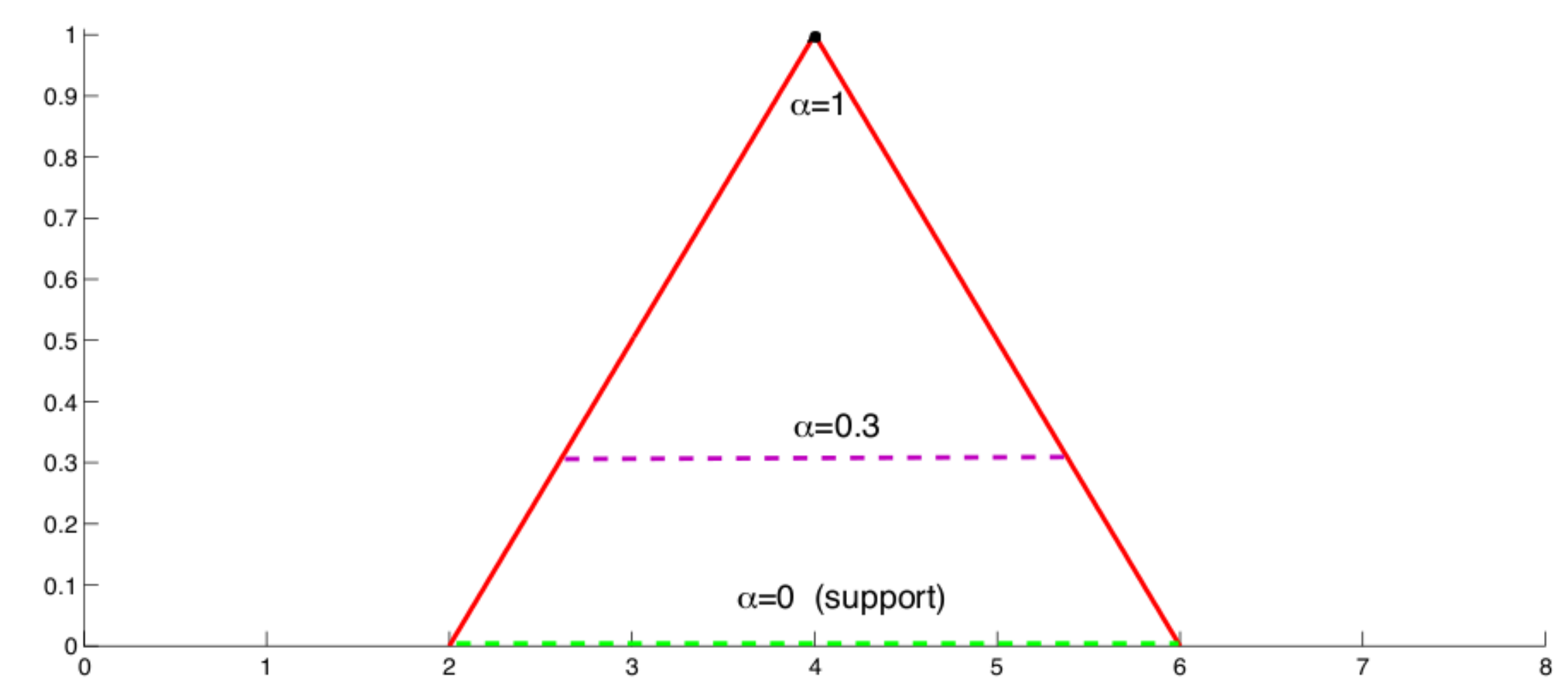}
\caption{Different $\alpha$-levels of a triangular-shaped fuzzy membership function. }
 \label{fig:alfa}
\end{center}
\end{figure}

The use of $\alpha$-levels, as shown in Figure \ref{fig:alfa}, allows to treat fuzzy sets as a set of nested real intervals; in the following the evolution of a fuzzy dynamic system will be evaluated {\em level-wise} by considering for each $\alpha$, the evolution of a system whose state is described by an interval (i.e., the corresponding $\alpha$-level).
In \cite{Laksh:2003} $\mathbb{E}$ is introduced as the space of all fuzzy subsets $x$ of $\mathbb{R}$ such that:
\begin{enumerate}
\item $x$ maps $\mathbb{R}$ onto $[0,1]$;
\item $[x]^0$ is a bounded subset of $\mathbb{R}$;
\item $[x]^\alpha$ is a compact subset of $\mathbb{R}$ for all $\alpha \in (0,1]$;
\item $x$ is \emph{fuzzy convex}, that is: $x(\phi p + (1-\phi)q) \geq min [x(p),x(q)]$ for all $p,q \in \mathbb{R}$
\end{enumerate}

These condition are equivalent to require that the membership functions are compact, that their shape is composed of a monotone nondecreasing and a monotone non increasing part (e.g., a triangle), and the $\alpha$-levels are nested, i.e., those with greater values of $\alpha$ are contained into those with smaller $\alpha$ (all the intervals are contained in the support).
With this formulation the fuzzy sets in $\mathbb{E}$ are often called \emph{Fuzzy Numbers} (FN) \cite{fn1,fn2}.
%

A \emph{triangular} fuzzy number (TFN) $x\in \mathbb{E}$, in particular, is described by an ordered triple $\{x_l,x_c,x_r\} \in\mathbb{R}^3$ with $x_l\leq x_c\leq x_r$ and such that $[x]^0=[x_l,x_r]$ and $[x]^1=\{x_c\}$, while in general the $\alpha$-level set is given, for any $\alpha \in [0,1]$ by:
\begin{equation}
[x]^\alpha = [x_c - (1-\alpha)(x_c -x_l), x_c + (1-\alpha)(x_r-x_c)]
\end{equation}

An illustrative example of triangular fuzzy numbers is reported in Figure \ref{fig:incertezza}; the figure shows two triangular fuzzy numbers with different levels of ambiguity, representing for example the codification of the statement ``an high wall". Specifically, the blue internal triangle represents the fuzzy number corresponding to the measure of $4$ meters with a given degree of ambiguity (i.e., the width of the base of the triangle);  the red external triangle still represents the same fuzzy number $4$, but with a greater ambiguity.

The triangular representation is not the sole available alternative; as depicted in Figure \ref{fig:shape} many other shapes are possible, and more complex is the shape, more descriptive is the resulting fuzzy number (i.e., the vagueness is  better characterized). For instance the existence of a plateau for a given interval represents complete indeterminacy for that interval, or an asymmetry with respect to the peak may represents different beliefs in the entity of the values that are smaller or bigger than the ``nominal" value (e.g., the best and worst cases).
However, in many applications, the Triangular Fuzzy Numbers are the most used, because they can be described by the triple of the abscissae of their vertices ($\{2,4,6\}$ and $\{3,4,5\}$ in the case of the Triangular FN in Figure \ref{fig:incertezza}); moreover, as shown in Figure \ref{fig:alfa}, a crisp value is obtained for $\alpha=1$, hence the level with the strongest belief collapses into a point. 

More complex shapes, as shown in Figure \ref{fig:shape}, allow to better characterize the uncertainty: the leftmost Fuzzy Number represents a situation where uncertainty rapidly decreases while approaching the peak value;  the rightmost Fuzzy Number, due to its trapezoidal shape, models the case where a single value with maximum belief can not be found.
Notice further that, as stated before, the shape of a FN needs not to be symmetric (see the central TFN in Figure \ref{fig:shape}), thus allowing to represents different beliefs on the left and right spread of uncertainty with respect to the value associated with the maximum belief.

\begin{figure}[!ht]
\begin{center}
\includegraphics[width=5in]{./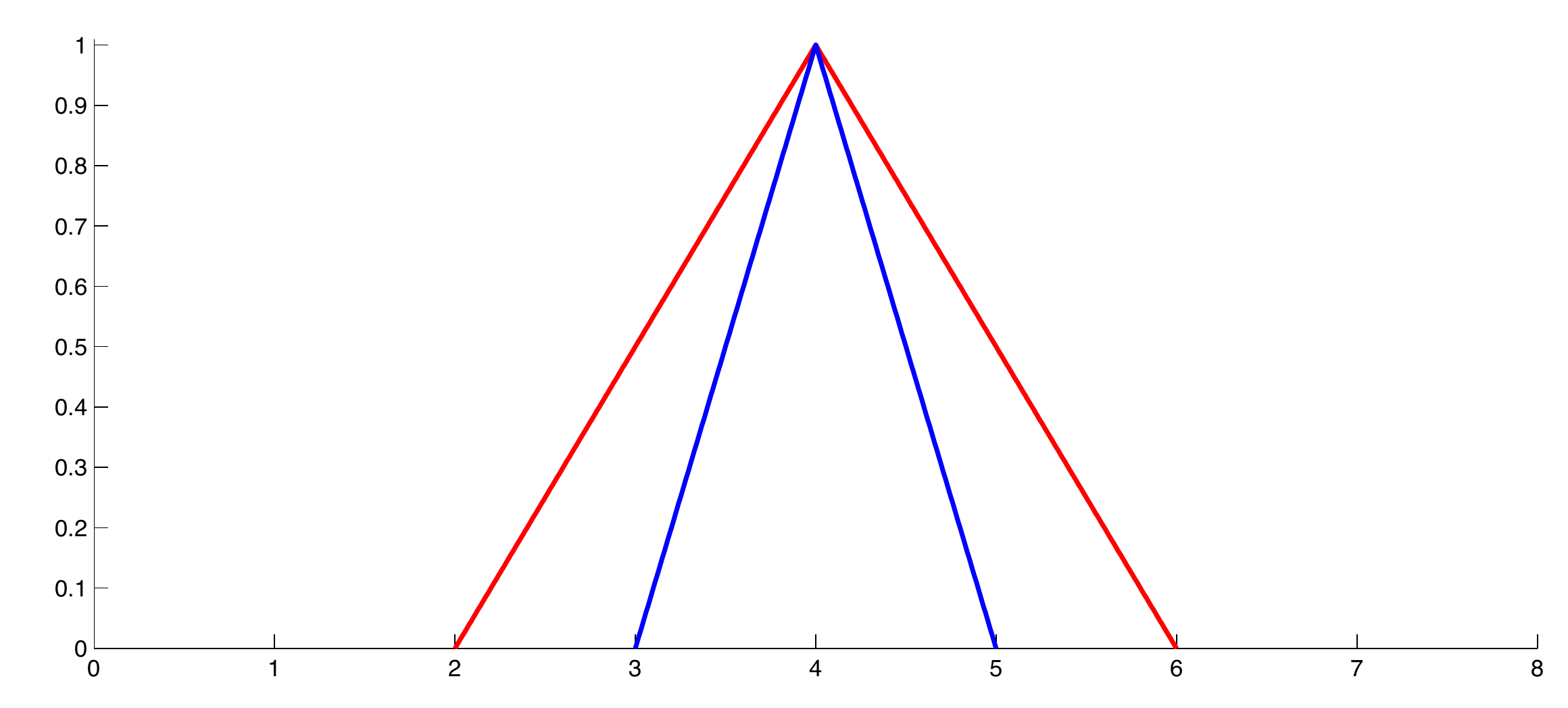}
\caption{ Two triangular fuzzy numbers representing the same value ``about 4", although with different uncertainty.}
 \label{fig:incertezza}
\end{center}
\end{figure}

\begin{figure}[!ht]
\begin{center}
\includegraphics[width=5in]{./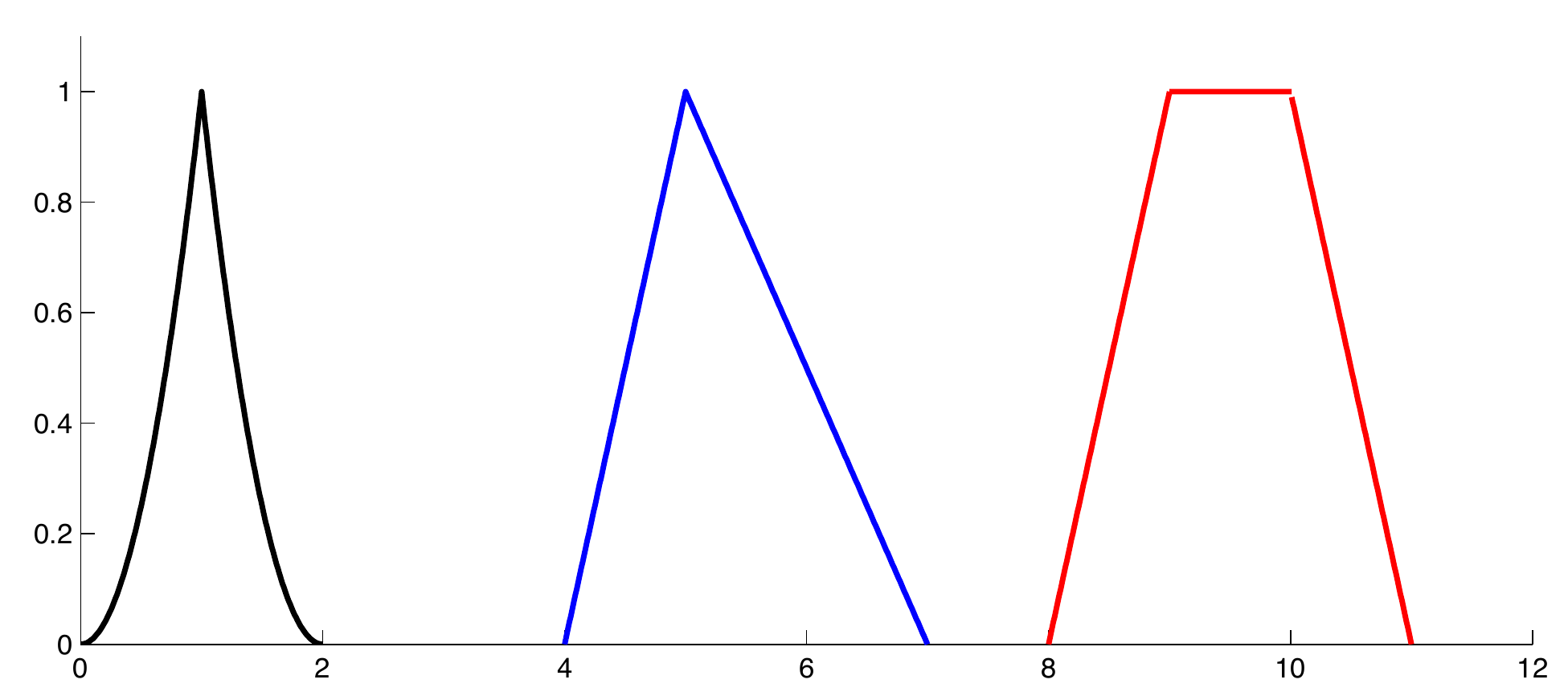}
\caption{Examples of fuzzy numbers in the space $\mathbb{E}$.}
 \label{fig:shape}
\end{center}
\end{figure}

In order to measure the ``distance" between two fuzzy sets, the space $\mathbb{E}$ is equipped with the metric \cite{Laksh:2003}:
\begin{equation}
d_\mathbb{E}(x_1,x_2)=\sup \{|x_1(p)-x_2(p)| : p \in \mathbb{R}\}
\end{equation}
which measures the largest difference in the membership grades of two fuzzy sets $x_1$ and $x_2$; clearly $d_\mathbb{E}(x_1,x_2)\in [0,1]$.
Figures \ref{fig:overlapping1} and \ref{fig:overlapping2} provide examples of computation of such a distance; it is worth to notice that, unless the maximum $\alpha$-level (i.e., $a=1$) of one of the two FNs coincides with an interval where at least a point in the other FN is non-zero, then  $d_\mathbb{E}(x_1,x_2)=1$ (see Figure  \ref{fig:overlapping1}); when such a condition is verified, the metric assumes a value that is inversely proportional to the degree of overlapping of the two FNs (see Figure \ref{fig:overlapping2}); finally, notice that the distance becomes zero if and only if the shape of the two FNs coincide.

\begin{figure}[!ht]
\begin{center}
\includegraphics[width=5in]{./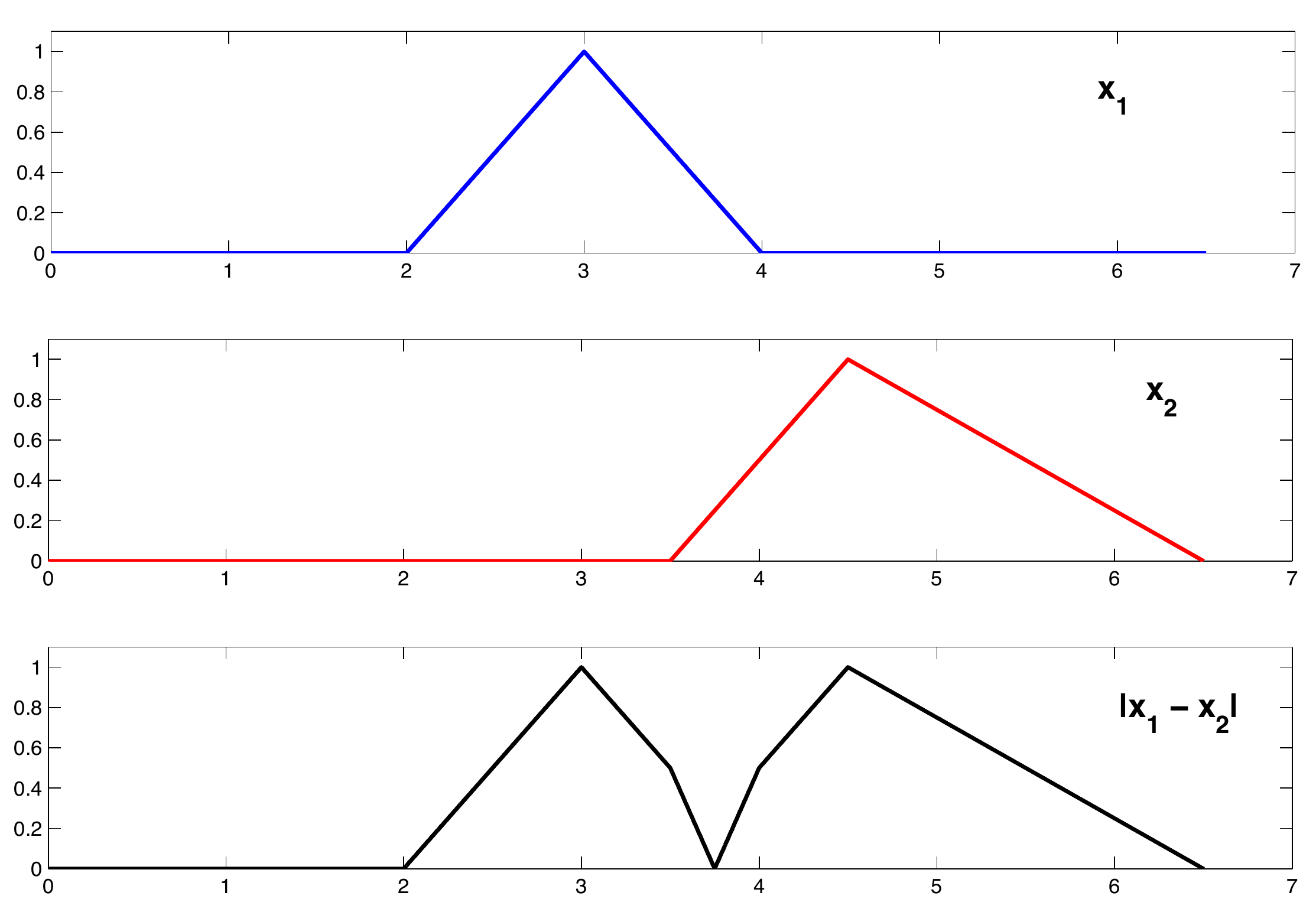}
\caption{Example of distance calculation for two TFNs $x_1=\{2,3,4\}$ and $x_2=\{3.5,4.5,6.5\}$. The distance $d_{\mathbb{E}}(x_1,x_2)$ is equal to $1$.}
 \label{fig:overlapping1}
\end{center}
\end{figure}

\begin{figure}[!ht]
\begin{center}
\includegraphics[width=5in]{./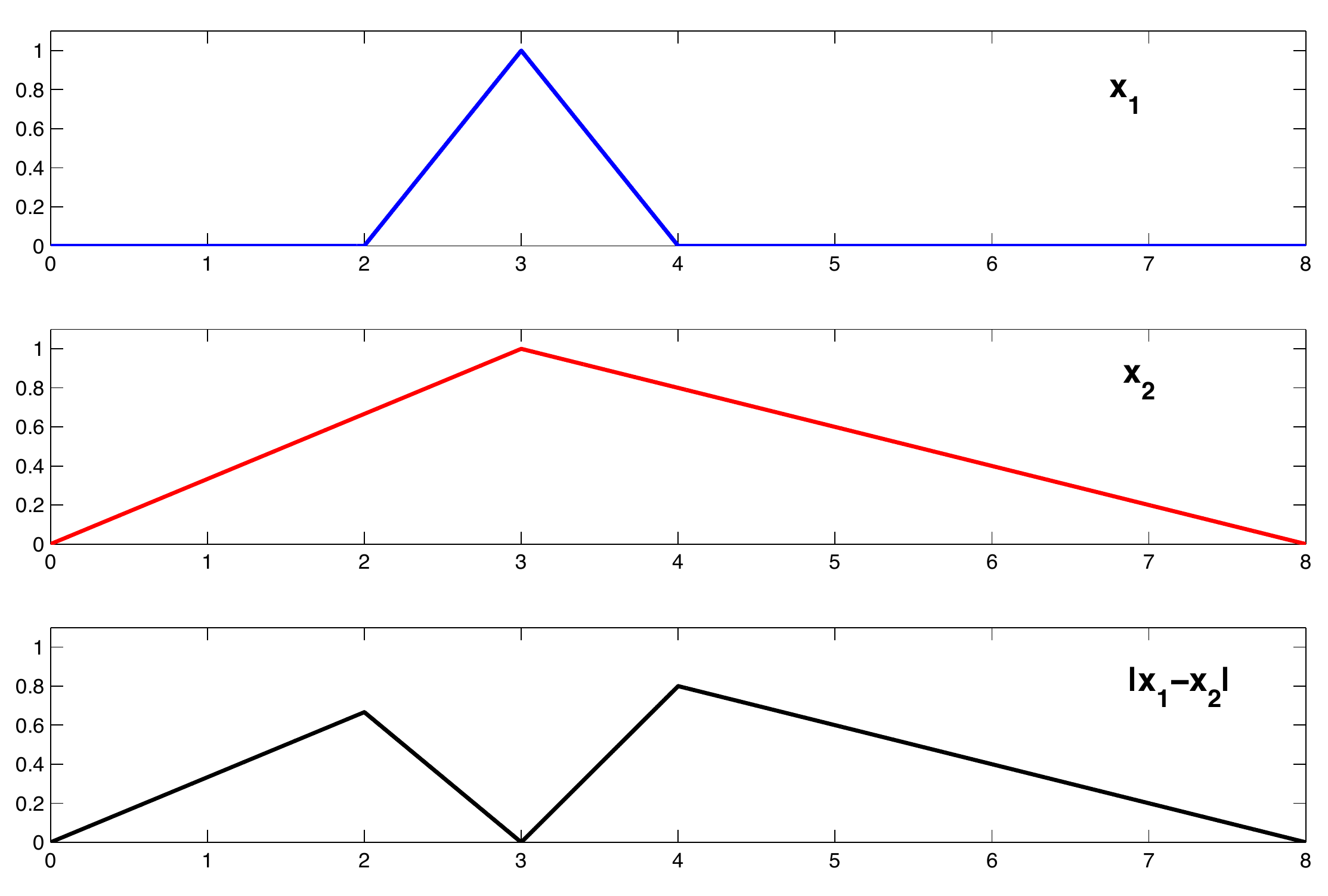}
\caption{Example of distance calculation for two TFNs $x_1=\{2,3,4\}$ and $x_2=\{0,3,8\}$. The distance $d_{\mathbb{E}}(x_1,x_2)$ is equal to $0.8$.}
 \label{fig:overlapping2}
\end{center}
\end{figure}

The following equation correlates the distance with the $\alpha$-sets \cite{Laksh:2003}, considering the euclidean distance d in $\mathbb{R}$:
\begin{eqnarray}
\label{hausdist}
d_\mathbb{E}(x_1,x_2)=\sup_{\alpha >0} \{\rho_{d,\mathbb{R}}([x_1]^\alpha,[x_2]^\alpha)\} & x_1,x_2 \in \mathbb{E}
\end{eqnarray}

In the following the concept of {\em level-wise convergence} of a sequence of fuzzy numbers will be defined, that is, for a fixed $\alpha \in [0,1]$, the convergence of the $\alpha$-levels of the FNs in the sequence.

Let $\{x_n\}$ be a sequence on $\mathbb{E}$, then $\{x_n\}$ converges level-wise to $x \in \mathbb{E}$ if, for all $\alpha \in (0,1]$:
\begin{eqnarray}
\rho_{d_\mathbb{R},\mathbb{R}}([x_n]^\alpha,[x]^\alpha)\rightarrow 0 & \mbox{ as } & n\rightarrow \infty
\end{eqnarray}
Define  
\begin{equation}
\label{eq:psieq0}
\Psi =  \{ x \in \mathbb{E}: x(\phi p + (1-\phi) q) \geq \phi x(p) + (1-\phi)x(q)\}
\end{equation}
for each $p, q \in [x]^0, \phi \in [0,1]$.

In order to consider vectors of $N$ components, each being a FN, the space $\mathbb{E}^{N}$ has to be characterized; to this end $\mathbb{E}^{N}$ is equipped with the following metric:
\begin{equation}
\label{dizero}
d_{\mathbb{E}^N}({\bf x},{\bf y}) = \sum_{i=1}^N d_\mathbb{E}(x_i,y_i)
\end{equation}
where ${\bf x}=[x_1, \ldots,x_n]^T$ and ${\bf y}=[y_1, \ldots,y_n]^T$, ${\bf x},{\bf y}\in \mathbb{E}^n$, while $\forall i=1, \ldots,N$ $x_i,y_1 \in  \mathbb{E}$.

Define the $\alpha$-level of a vector of FNs ${\bf x}\in \mathbb{E}^N$ as the set of vectors ${\bf z}\in \mathbb{R}^N$ such that, $\forall i =1,\ldots, N$ $z_i $ belongs to the $\alpha$-level of $i$-th component $x_i$.

It is worth to notice that many applications based on fuzzy theory often handle uncertain models implicitly, by considering approximations of the algebraic operations extended to fuzzy numbers; however these approaches are often limited to triangular shapes, due to computational constraints and due to the complexity of the approximations if the shape is not triangular. 
Other approaches are typically aimed to synthesize controllers based on fuzzy rules \cite{fc1,fc2,fc3,fc4} in order to control complex real-world systems encoding the experience of human operators (e.g., a parking system for a wheeled mobile robot \cite{fc5} or the navigation of a mobile robot \cite{ulivi}).
Within all these approaches, however, it is not possible (or very hard) to obtain a closed form, neither in the simplest cases, and it is therefore very difficult to study the stability of the fuzzy systems.

In order to clarify the limits of simulation-based techniques, consider for instance the scalar system 
\begin{equation}x(k+1)=mx(k)+q\end{equation}
where $x(0)=\{x_{0l},x_{0c},x_{0r}\}$, $c=\{c_{l},c_{c},c_{r}\}$ and $q=\{q_{l},q_{c},q_{r}\}$ are triangular fuzzy numbers.
In the simulative perspective, the extension of the sum to TFNs is immediate;  indeed, such an operation is linear and the resulting fuzzy number is also triangular \cite{fn1}. 
The sum of two TFNs $=[a_l, a_c, a_r]$ and $b=[b_l, b_c, b_r]$ is given by
\begin{equation}
\label{fuzzysum}
a+b=[a_l + b_l, a_c + b_c, a_r + b_r]
\end{equation}

Analogously, the multiplication by a real scalar $\beta$ is given by
\begin{equation}
\label{fuzzyscalarmult}
\beta a=[\beta a_1, \beta a_2, \beta a_3]
\end{equation}

Triangular fuzzy numbers, therefore, are closed with respect to sum and product with a real scalar \cite{fn1}.

When addressing the product of two TFNs, however, there is the need to introduce approximations, since the product of two TFNs is in general not triangular;
in the literature many approximations of the product have been given, and the most simple,  in the case where $a,b \geq 0$ (i.e., $a_l,b_l \geq 0$), is given by:
\begin{equation}
\label{fuzzyprod}
a*b=[a_l*b_l, a_c*b_c, a_r*b_r]
\end{equation}
However, in this way, besides loosing expressivity (because only triangular numbers can be typically handled by simulators), it is not possible to analyze and formally characterize the stability of the system.

Moreover the approximated methods fail to address more complex situations, where, besides the products, a general nonlinear fuzzy-valued function with fuzzy argument has to be computed.

In the following  we will consider {\it Complete Discrete-Time Fuzzy System} (C-DFS), defined as follows:
\begin{equation}
\label{fuzzysystemfull}
{\bf x}(k+1) = {\bf H}({\bf x}(k),k), \quad {\bf x}(0)={\bf x}_0
\end{equation}
where ${\bf H}:\mathbb{N}^+ \times \mathbb{E}^N \rightarrow \mathbb{E}^N$ and ${\bf x},{\bf x}_0 \in \mathbb{E}^N$.
Notice that for System (\ref{fuzzysystemfull}), even if the initial conditions were crisp, due to the fuzziness of the dynamic the state of the system is composed of fuzzy variables.

However, as noted before, the dynamic of system (\ref{fuzzysystemfull}) generates evolutions that are in general complex and not triangular, and the level-wise evaluation of such a system is particularly hard, as will be explained in the following; this makes unfeasible any direct calculation of (\ref{fuzzysystemfull}).

An elegant solution can be achieved in the framework of Difference Inclusions \cite{Kellett0,Kellett}.

\section{Fuzzy Difference Inclusions}
In the continuous time case, the approach of representing a fuzzy system as a {\em Fuzzy Differential Inclusion} (FDI) that is, a family of difference inclusions \cite{Cellina,Filippov} (one for each $\alpha$-level), has been successfully adopted in \cite{Hull,Diamond,Lak0}.

Defining $[{\bf H}({\bf x}(k),k)]^\alpha={\bf Q}([{\bf x}(k)]^\alpha,k,\alpha)$, the C-DFS (\ref{fuzzysystemfull}) can be rewritten  as the FDI:
\begin{equation}
\label{incl_fuzzy1}
{\bf x}^{\alpha}(k+1)\in {\bf Q}([{\bf x}(k)]^\alpha,k,\alpha); \quad {\bf x}^{\alpha}(0)\in[{\bf x}_0]^\alpha; \quad   \alpha\in J 
\end{equation}
where $J=[0,1]$, $\Omega$ is an open subset of $\mathbb{R}^N \times \mathbb{N}^+$ containing $[{\bf x}_0]^\alpha \times \{0\}$ and ${\bf Q}:\Omega \times J \rightarrow \mathbb{K}_C^N$ is a continuous and set valued map.
This latter condition is equivalent to require that the set-valued function ${\bf Q}([{\bf x}(k)]^\alpha,k,\alpha)$ is defined for a neighborhood of the initial conditions, and its image is compact and convex.
Notice that in (\ref{incl_fuzzy1}) ${\bf x}^{\alpha}(k+1),{\bf x}^{\alpha}(0) \in \mathbb{R}^N$ and hence $[{\bf x}(k+1)]^\alpha$ is the set obtained considering all the attainable ${\bf x}^{\alpha}(k+1)$, starting from all the ${\bf x}^{\alpha}(k) \in [{\bf x}(k)]^\alpha$.

From now on, it is assumed that all maps are {\em proper}, that is have nonempty images of points in their domain. 

The idea of considering difference inclusions becomes clearer while considering a simple example; consider the system 
\begin{equation}
x(k+1)=mx(k) \quad x(0)=x_0
\end{equation}
where $m=\{m_l,m_c,m_r\}$ and $x_0=\{x_{0l},x_{0c},x_{0r}\}$ are fuzzy.
As stated before, the product does not yield to a triangular fuzzy number and is very hard to evaluate; this is even more true in a level-wise representation.
Indeed, for each $\alpha$-level we have to take into account the product of two intervals, $[\underline m^\alpha, \bar m^\alpha]$ and $[\underline x^\alpha(k), \bar x^\alpha(k)]$ and the approach used for simple DFS system cannot be adopted. 
A solution is to consider, for each $\alpha$-level and for each point belonging to the interval $[\underline m^\alpha, \bar m^\alpha]$, a level-wise representation of a standard DFS, thus obtaining the difference inclusion
\begin{equation}
x^\alpha(k+1) \in [m]^\alpha[x(k)]^\alpha
\end{equation}

In order to proceed there is the need to recall some basic notions on difference inclusions and their stability.

\subsection{Difference Inclusions}
A Difference Inclusion \cite{Kellett0,Kellett} is given by
\begin{equation}
\label{eq:di}
{\bf z}(k+1) \in {\bf P}({\bf z}(k),k), \quad {\bf z}(0)\in {\bf z}_0
\end{equation}

\noindent where ${\bf z}_0\subset \mathbb{R}^N$, $\Omega$ is an open subset of $\mathbb{R}^N \times \mathbb{N}^+$ containing ${\bf z}(0)\times \{0\}$  and ${\bf P}:\Omega \rightarrow \mathbb{K}_C^N$ is a continuous set valued map.
Notice that the dynamic ${\bf P}$ is set valued and also the initial condition is a set.

Notice further that, in a very general perspective, difference and differential inclusions are often required to be {\em upper} or {\em lower semicontinuous}, since it is possible to study the behavior of inclusions with discontinuous right-end side; however such an extension is out of the scope of this work, and the reader may refer to \cite{Laksh:2003,Diamond,Cellina,Filippov} for a more general approach.

Let $\mathbb{Z}_{k}(\mathbb{R}^N)$ be the set of continuous functions $f:\mathbb{R}^N \times \{0, \ldots, k\}\rightarrow \mathbb{R}^N$. We need to define the following sets:
\begin{itemize}
\item {\bf Set of Solutions}: the \emph{set of solutions} $\mathbb{S}({\bf z}_0, k)$ of the inclusion (\ref{eq:di}) on $\{0,\ldots,k\}$, is the set of all solutions of the inclusion (\ref{eq:di}) from time step $0$ until $k$; clearly $\mathbb{S}({\bf z}_0, k)\in \mathbb{Z}_{k}(\mathbb{R}^N)$.
\item {\bf Attainable Set}: the \emph{attainable set}, for a given step $k$, is the set of values that the solutions of the inclusion (\ref{eq:di}) assume in $k$ and is defined as
\begin{equation}
\mathbb{A}({\bf z}_0, k)=\{{\bf z}(k): {\bf z}(\cdot) \in \mathbb{S}({\bf z}_0, k)\}
\end{equation} 
\end{itemize}

%

A set $\mathbb{M}$ is {\em stable} for the Inclusion (\ref{eq:di}) if for all $\epsilon >0$ there exists $\delta=\delta(\epsilon)>0$ such that ${\bf z}_0 \in \mathbb{M}+\delta \mathbb{B}^N$ implies ${\bf z}(k)\in \mathbb{M}+\epsilon \mathbb{B}^N$ for all $k\in \mathbb{N}^+$ and for every solution in $\mathbb{S}({\bf z}_0, k)$. 

The above definition may be rephrased as ${\bf z}_0\in \mathbb{M}+\delta \mathbb{B}^N$ implies that 
$\rho_{d_{\mathbb{R}^N},\mathbb{R}^N}(\mathbb{A}({\bf z}_0,k),\mathbb{M})\leq \epsilon$  for all $k\in \mathbb{N}^+$, where $\rho_{d_{\mathbb{R}^N},{\mathbb{R}^N}}(\cdot,\cdot)$ is the Hausdorff distance in $\mathbb{R}^N$. 

If $\rho_{d_{\mathbb{R}^N},\mathbb{R}^N}(\mathbb{A}({\bf z}_0,k),M)\rightarrow 0$ as $k\rightarrow \infty$ and M is  stable then it is asymptotically stable.

\section{Stability of Fuzzy Difference Inclusions}
\label{sec_fuzzy1}

Let us now discuss the characteristics of a C-DFS (\ref{fuzzysystemfull}) expressed as a FDI (\ref{incl_fuzzy1}).

Denote the set of solutions of an inclusion in the family (\ref{incl_fuzzy1}) (i.e., for a given $\alpha\in [0,1]$) by 
$\mathbb{S}_\alpha({\bf x}_0,k)$ and the attainable set by $\mathbb{A}_\alpha ({\bf x}_0, k)$, while $\mathbb{S}^f({\bf x}_0,k)$ and $\mathbb{A}^f({\bf x}_0, k)$ are the set of solutions and the attainable set for the whole family (\ref{incl_fuzzy1}).

Note that, during the evolution of a FDI, it is not in general verified that the convexity of the $\alpha$-levels is preserved; 
therefore the stability of FDIs is often studied in  the space $\mathbb{D}^N$ of fuzzy sets with not necessarily convex level sets \cite{Laksh:2003,Hong};however in this work we relax the semicontinuity assumption, supposing that the dynamic is continuous, hence avoiding the issue of non-convex level sets \cite{Hong}.

The following {\em Stacking Theorem} \cite{Laksh:2003}, whose statement is reported below, characterizes the elements of $\mathbb{D}$, but can be easily generalized to the vectorial case or to any Banach space.
\begin{theorem}
\label{stacking}
Let $\{Y_\beta \subset \mathbb{R}: 0\leq \beta \leq 1\}$ be a family of compact subsets satisfying
\begin{enumerate}
\item $Y_\beta \subseteq Y_\alpha$ for $0 \leq \alpha \leq \beta \leq 1$
\item $Y_\beta = \cap_{i=1}^\infty Y_{\beta_i}$ for any nondecreasing sequence $\beta_i \rightarrow \beta$ in $[0,1]$
\end{enumerate}
Then there is a fuzzy set $u\in \mathbb{D}$ such that $[u]^\alpha=Y_\alpha$. If the $Y_\beta$ are also convex, then $u \in \mathbb{E}$. Conversely the level set of any $u\in \mathbb{E}$ are compact and convex and satisfy these conditions.
\end{theorem}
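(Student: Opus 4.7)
The plan is to construct the fuzzy set $u$ explicitly from the family via its ``sup-level'' representation and then verify, level by level, that the constructed $u$ has the $Y_\alpha$ as its $\alpha$-cuts. Concretely, for each $p \in \mathbb{R}$ define
\begin{equation}
u(p) = \sup\{\alpha \in [0,1] : p \in Y_\alpha\},
\end{equation}
with the convention $\sup \emptyset = 0$. Since $Y_0$ is compact, hence bounded, and $Y_\alpha \subseteq Y_0$ for every $\alpha > 0$ by condition~1, the support of $u$ is contained in $Y_0$ and is therefore bounded, which takes care of the boundedness requirement for $\mathbb{D}$.

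The core step is to show $[u]^\alpha = Y_\alpha$ for every $\alpha \in (0,1]$. The inclusion $Y_\alpha \subseteq [u]^\alpha$ is immediate from the definition of $u$. For the reverse inclusion, suppose $u(p) \geq \alpha$. If $u(p) > \alpha$, then by definition of the supremum there exists some $\beta > \alpha$ with $p \in Y_\beta$, and nesting (condition~1) gives $p \in Y_\alpha$. If instead $u(p) = \alpha$, one can select a nondecreasing sequence $\alpha_i \nearrow \alpha$ with $p \in Y_{\alpha_i}$ for every $i$, and then condition~2 yields $p \in \bigcap_{i} Y_{\alpha_i} = Y_\alpha$. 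This gives $[u]^\alpha \subseteq Y_\alpha$ and hence equality; the compactness of the level sets is then inherited directly from the compactness of the $Y_\alpha$. The support $[u]^0$, defined as the closure of $\bigcup_{\alpha > 0}[u]^\alpha = \bigcup_{\alpha > 0} Y_\alpha$, coincides with $Y_0$ after invoking condition~2 with $\beta_i \searrow 0$ together with compactness.

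If in addition each $Y_\alpha$ is convex, I would verify fuzzy convexity of $u$ by taking $p,q \in \mathbb{R}$, $\phi \in [0,1]$, and setting $\lambda = \min\{u(p),u(q)\}$; for any $\alpha < \lambda$ both points lie in $Y_\alpha = [u]^\alpha$, so by convexity of $Y_\alpha$ the point $\phi p + (1-\phi)q$ lies in $Y_\alpha$, whence $u(\phi p + (1-\phi)q) \geq \alpha$, and letting $\alpha \uparrow \lambda$ gives the required inequality, so $u \in \mathbb{E}$. For the converse, given $u \in \mathbb{E}$ the sets $[u]^\alpha$ are compact and convex by assumption; nesting is immediate, and the intersection property reduces to the observation that $p \in \bigcap_i [u]^{\beta_i}$ means $u(p) \geq \beta_i$ for all $i$, hence $u(p) \geq \sup_i \beta_i = \beta$, i.e., $p \in [u]^\beta$, while the reverse inclusion is trivial from nesting.

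The main technical obstacle I expect is the boundary case $u(p) = \alpha$ in the reverse inclusion $[u]^\alpha \subseteq Y_\alpha$: there is no a priori guarantee that the supremum is attained, and it is precisely here that condition~2 (left-continuity of the family under nondecreasing sequences) is essential. Without that hypothesis one could have a point $p$ in every $Y_{\alpha_i}$ with $\alpha_i \nearrow \alpha$ yet not in $Y_\alpha$, which would destroy the equality of the level sets. Everything else is essentially bookkeeping with suprema and nested compacts.
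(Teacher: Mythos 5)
The paper does not actually prove this theorem: it is imported verbatim from \cite{Laksh:2003} as the ``Stacking Theorem'' (``whose statement is reported below''), so there is no in-paper argument to compare yours against. Your proof is the standard representation-theorem construction $u(p)=\sup\{\alpha\in[0,1] : p\in Y_\alpha\}$, and its core is sound: both inclusions giving $[u]^\alpha=Y_\alpha$ for $\alpha\in(0,1]$, the correct identification of the non-attained-supremum case as the place where condition~2 is indispensable, the fuzzy-convexity verification via $\alpha\uparrow\min\{u(p),u(q)\}$, and the converse direction are all fine.

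There is, however, one genuine gap, at $\alpha=0$. You assert that $[u]^0=\overline{\bigcup_{\alpha>0}Y_\alpha}$ ``coincides with $Y_0$ after invoking condition~2 with $\beta_i\searrow 0$.'' Condition~2 is stated only for \emph{nondecreasing} sequences $\beta_i\rightarrow\beta$; a nondecreasing sequence converging to $0$ is identically $0$, so the hypothesis is vacuous at $\beta=0$ and licenses nothing about decreasing sequences. Indeed the claimed equality can fail under the stated hypotheses: take $Y_0=[0,2]$ and $Y_\alpha=[0,1]$ for every $\alpha>0$. Conditions 1 and 2 hold, yet $[u]^0=\overline{\bigcup_{\alpha>0}Y_\alpha}=[0,1]\neq Y_0$. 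So either the conclusion must be read as $[u]^\alpha=Y_\alpha$ for $\alpha\in(0,1]$ only (which is how such representation theorems are usually stated, and is all the paper actually uses downstream), or an additional hypothesis such as $Y_0=\overline{\bigcup_{\alpha>0}Y_\alpha}$ must be imposed. This is arguably a defect of the statement as transcribed in the paper rather than of your strategy, but the step as you wrote it is not justified and should be flagged or repaired rather than waved through.
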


The above theorem states that a family of compact and closed nested intervals coincides with a fuzzy set in $\mathbb{D}$, and the property can be trivially extended to the vectorial case (if the property is true for each state variable); moreover a fuzzy set in $\mathbb{E}^N$ is always composed of level sets that are non increasing in $\alpha$.

There is the need to introduce the following stability definitions, that extend to the fuzzy domain the definitions provided for difference inclusions. Notice that, although ${\bf x}_0$ is a point in $\mathbb{E}^N$, the stability definitions involve the fuzzy attainable set, therefore the stability is defined for a set $\mathbb{U}\subset \mathbb{E}^N$.

A set $\mathbb{U} \subset \mathbb{E}^N$ is \emph{stable} for System (\ref{incl_fuzzy1}) if for all $\epsilon >0$ there exists a $\delta(\epsilon)$ such that 
\begin{equation}
{\bf x}_0\in \mathbb{U}+\delta \mathbb{B}_E^N \rightarrow \mathbb{A}^f({\bf x}_0,k)\in \mathbb{U} + \epsilon \mathbb{B}_E^N
\end{equation} 
for all $k\in \mathbb{N}^+$, where $\mathbb{B}_{E}^N$ is the open unit ball in $\mathbb{E}^N$.

In other terms 
\begin{equation}
\rho^*_{d_{\mathbb{E}^N},\mathbb{E}^N}({\bf x}_0,\mathbb{U})<\delta \Rightarrow \rho_{d_{\mathbb{E}^N},\mathbb{E}^N}(\mathbb{A}^f({\bf x}_0,k),\mathbb{U})\leq \epsilon
\end{equation}
for all $k\in \mathbb{N}^+$, where $\rho^*_{d_{\mathbb{E}^N},\mathbb{E}^N}(\cdot,\cdot)$ and $\rho_{d_{\mathbb{E}^N},\mathbb{E}^N}(\cdot,\cdot)$ are the Hausdorff distance and the Hausdorff separation in $\mathbb{E}^N$, respectively \cite{Laksh:2003}, based on a distance $d_{\mathbb{E}^N}$, defined in (\ref{dizero}).

If $\rho_{d_{\mathbb{E}^N},\mathbb{E}^N}(\mathbb{A}^f({\bf x}_0,k),\mathbb{U})\rightarrow 0$ as $k \rightarrow \infty$ and $\mathbb{U}$ is stable, then it is  asymptotically stable.

The following theorem characterizes the structure of the set of solutions and the attainable set of a FDI (\ref{incl_fuzzy1}).

\begin{theorem}
\label{attsol}
Let $\mathbb{U}\subset \mathbb{D}^N$ and suppose that $[\mathbb{U}]^0 \subset \mathbb{R}^N$ is bounded.
The attainable sets $\mathbb{A}_\alpha({\bf x}_0,k)$ of the family of inclusions (\ref{incl_fuzzy1}) are the level sets of a fuzzy set $\mathbb{A}^f({\bf x}_0,k)$ and the sets of solutions $\mathbb{S}_\alpha({\bf x}_0,k)$ are the level sets of a fuzzy set $\mathbb{S}^f({\bf x}_0,k)$. 
\end{theorem}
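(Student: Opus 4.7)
The plan is to apply the Stacking Theorem (Theorem \ref{stacking}), componentwise extended to $\mathbb{R}^N$, to the families $\{\mathbb{A}_\alpha({\bf x}_0,k):\alpha\in[0,1]\}$ and $\{\mathbb{S}_\alpha({\bf x}_0,k):\alpha\in[0,1]\}$ for each fixed $k$. Concretely, I would show that for every $k\in\mathbb{N}^+$ these families are nested families of compact subsets satisfying the two hypotheses of Theorem \ref{stacking}; the Stacking Theorem then produces the desired fuzzy sets $\mathbb{A}^f({\bf x}_0,k)$ and $\mathbb{S}^f({\bf x}_0,k)$ whose $\alpha$-levels coincide with them. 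Because the FDI (\ref{incl_fuzzy1}) has an intrinsically recursive structure in $k$, the natural tool is induction on $k$.

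For the base case $k=0$, one has $\mathbb{A}_\alpha({\bf x}_0,0)=[{\bf x}_0]^\alpha$, and since ${\bf x}_0\in\mathbb{E}^N$ the converse direction of Theorem \ref{stacking} guarantees that the family $\{[{\bf x}_0]^\alpha\}$ is a family of compact (convex) nested subsets that meets both stacking hypotheses. For the inductive step, assume the family $\{\mathbb{A}_\alpha({\bf x}_0,k)\}$ is the level-set family of some fuzzy set. Writing $\mathbb{A}_\alpha({\bf x}_0,k+1)={\bf Q}(\mathbb{A}_\alpha({\bf x}_0,k),k,\alpha)$, compactness follows at once because ${\bf Q}$ takes values in $\mathbb{K}_C^N$; the nesting $\mathbb{A}_\beta({\bf x}_0,k+1)\subseteq\mathbb{A}_\alpha({\bf x}_0,k+1)$ for $\alpha\le\beta$ follows from the inductive nesting together with the fact that $[{\bf H}({\bf x}(k),k)]^\beta\subseteq[{\bf H}({\bf x}(k),k)]^\alpha$, which is built into the level-set definition of ${\bf Q}$ because ${\bf H}$ has codomain $\mathbb{E}^N$.

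The main obstacle is verifying hypothesis (2) of the Stacking Theorem at the inductive step: for any nondecreasing sequence $\alpha_i\nearrow\alpha$ one must show $\mathbb{A}_\alpha({\bf x}_0,k+1)=\bigcap_i\mathbb{A}_{\alpha_i}({\bf x}_0,k+1)$. One direction is immediate from the nesting; for the reverse, I would take a point ${\bf y}$ in the intersection and exhibit a preimage in ${\bf Q}(\mathbb{A}_\alpha({\bf x}_0,k),k,\alpha)$. To do so, for each $i$ pick ${\bf z}_i\in\mathbb{A}_{\alpha_i}({\bf x}_0,k)$ with ${\bf y}\in{\bf Q}({\bf z}_i,k,\alpha_i)$; the inductive hypothesis (applied to step $k$) gives that the $\mathbb{A}_{\alpha_i}({\bf x}_0,k)$ also satisfy property (2), so all the ${\bf z}_i$ lie in the compact set $\mathbb{A}_{\alpha_1}({\bf x}_0,k)$, and a convergent subsequence yields a limit ${\bf z}^*\in\bigcap_i\mathbb{A}_{\alpha_i}({\bf x}_0,k)=\mathbb{A}_\alpha({\bf x}_0,k)$. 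Joint continuity of the set-valued map ${\bf Q}$ in all three arguments (assumed in Section 3) then gives ${\bf y}\in{\bf Q}({\bf z}^*,k,\alpha)\subseteq\mathbb{A}_\alpha({\bf x}_0,k+1)$, as required. The assumption that $[\mathbb{U}]^0$ is bounded is used here to keep all iterates inside a compact region, so that the selection arguments above actually produce subsequential limits.

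The argument for $\mathbb{S}_\alpha({\bf x}_0,k)$ runs in parallel but in the function space $\mathbb{Z}_k(\mathbb{R}^N)$: a solution is a finite trajectory $({\bf z}(0),\dots,{\bf z}(k))$ satisfying the inclusion at every step, nesting is inherited stage by stage from the nesting of attainable sets, and the intersection property reduces to finitely many applications of the continuity/compactness argument used above, one per step. Applying the vectorial analogue of Theorem \ref{stacking} (whose validity is remarked on just after its statement) to this family yields the fuzzy set $\mathbb{S}^f({\bf x}_0,k)$ whose level sets are exactly $\mathbb{S}_\alpha({\bf x}_0,k)$, completing the proof.
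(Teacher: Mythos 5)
Your proposal is correct and follows essentially the same route as the paper: verify that the families $\{\mathbb{A}_\alpha\}$ and $\{\mathbb{S}_\alpha\}$ are compact, nested, and satisfy the intersection condition, then invoke the (vectorial) Stacking Theorem in $\mathbb{D}^N$ and in $\mathbb{Z}_k(\mathbb{R}^N)$ respectively. In fact your treatment of hypothesis (2) --- extracting a subsequential limit of preimages ${\bf z}_i$ inside the compact set $\mathbb{A}_{\alpha_1}({\bf x}_0,k)$ and using continuity of ${\bf Q}$ jointly in its arguments --- supplies a genuine argument at the one point where the paper merely asserts that ``the equivalence holds in the limit of the infinite intersection,'' so your write-up is, if anything, more complete than the published proof.
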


\begin{proof}
The set $\mathbb{S}_\alpha({\bf x}_0,k)$ is composed of solutions ${\bf x}^\alpha(\cdot)$ such that ${\bf x}^\alpha(k') \in \mathbb{A}_\alpha({\bf x}_0,k')$ for each $k'\in \{0, \ldots, k\}$.

To show that the sets $\mathbb{S}_\alpha({\bf x}_0,k)$ for each $\alpha \in [0,1]$ are compact we need to show that $\mathbb{A}_\alpha({\bf x}_0,k')$ is compact for each $k'\in \{0, \ldots, k\}$.
It is a well known result that compactness is invariant to continuous transformations (see \cite{hockingtopology} for instance); since $[{\bf x}(0)]^\alpha$ is compact, it follows that $\mathbb{A}_\alpha({\bf x}_0,1)$ is compact. Iterating it is verified that all the $\mathbb{A}_\alpha({\bf x}_0,k')$ are compact and hence $\mathbb{S}_\alpha({\bf x}_0,k)$ is compact.

Notice that ${\bf Q}([{\bf x}(k)]^\alpha,k,\alpha) \subseteq {\bf Q}([{\bf x}(k)]^0,k,0)$, $[{\bf x}]^\alpha \subseteq [{\bf x}]^0$ for all $\alpha \in [0,1]$, therefore the sets $\mathbb{S}_\alpha({\bf x}_0,k)$ are nonincreasing in $\alpha$.

Let $\{\beta_i\}$ be a nondecreasing sequence in $[0,1]$ converging to $\beta \in [0,1]$ and consider the sequence $\{ \mathbb{S}_{\beta_i}({\bf x}_0,k)\}$; clearly the intersection of the first $j$ elements in the sequence coincides with $\mathbb{S}_{\beta_j}({\bf x}_0,k)$.
Therefore the intersection of all the elements in the sequence gives
\begin{equation}
\bigcup_{i} \mathbb{S}_{\beta_i}({\bf x}_0,k) = \mathbb{S}_{\beta}({\bf x}_0,k)
\end{equation}
where, for infinite sequences, the equivalence holds in the limit of the infinite intersection.
A similar argument proves the result for the attainable sets  $\mathbb{A}_\alpha({\bf x}_0,k)$.

It is now possible to apply the Stacking Theorem (\ref{stacking}) on the space $\mathbb{Z}_k(\mathbb{R}^N)$ for $\mathbb{S}_{\beta}({\bf x}_0,k)$ and on the space $\mathbb{D}^N$ for $\mathbb{A}_\alpha({\bf x}_0,k)$, proving the statement.

\end{proof}

The following theorem characterizes the stability of a FDI (\ref{incl_fuzzy1}).
\begin{theorem}
\label{stabfuz}
Let $\mathbb{U}\subset \mathbb{D}^N$ and suppose that $[\mathbb{U}]^0 \subset \mathbb{R}^N$ is bounded  and 
the inclusion
\begin{equation}
\label{aaaa_di}
[{\bf x}]^{0}(k+1)\in {\bf Q}([{\bf x}]^0(k),k,0); \quad [{\bf x}]^{0}(0)=[{\bf x}_0]^0
\end{equation}
is (asymptotically stable); then the FDI (\ref{incl_fuzzy1}) is (asymptotically) stable.
\end{theorem}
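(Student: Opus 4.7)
The plan is to reduce the fuzzy stability question to the already-assumed crisp stability of the $0$-level inclusion (\ref{aaaa_di}), exploiting two elementary monotonicity facts. First I would establish that, since $[{\bf x}]^\alpha \subseteq [{\bf x}]^0$ for every $\alpha \in [0,1]$ and correspondingly ${\bf Q}([{\bf x}]^\alpha, k, \alpha) \subseteq {\bf Q}([{\bf x}]^0, k, 0)$ --- the same inclusion already used in the proof of Theorem \ref{attsol} --- a straightforward induction on $k$ yields
\[
\mathbb{A}_\alpha({\bf x}_0, k) \subseteq \mathbb{A}_0({\bf x}_0, k), \qquad \forall k \in \mathbb{N}^+, \ \forall \alpha \in [0,1].
\]
In other words, controlling the outermost level simultaneously controls every other level.

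Next I would unpack the fuzzy initial-data hypothesis. By (\ref{dizero}) the metric $d_{\mathbb{E}^N}$ is a coordinatewise sum of $d_{\mathbb{E}}$, and by (\ref{hausdist}) each $d_{\mathbb{E}}(x_i,y_i)$ equals $\sup_{\alpha} \rho_{d,\mathbb{R}}([x_i]^\alpha,[y_i]^\alpha)$. In particular, ${\bf x}_0 \in \mathbb{U} + \delta\mathbb{B}_E^N$ forces $[{\bf x}_0]^0$ to lie within $\delta$ of $[\mathbb{U}]^0$ in the Hausdorff distance on $\mathbb{R}^N$. Invoking the stability hypothesis on the crisp inclusion (\ref{aaaa_di}), for any $\epsilon>0$ we obtain a $\delta(\epsilon)>0$ such that $\mathbb{A}_0({\bf x}_0,k) \subseteq [\mathbb{U}]^0 + \epsilon\mathbb{B}^N$ for every $k\in\mathbb{N}^+$. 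Combined with the monotonicity above, this yields $\mathbb{A}_\alpha({\bf x}_0,k) \subseteq [\mathbb{U}]^0 + \epsilon\mathbb{B}^N$ uniformly in $\alpha$.

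Finally I would lift this uniform level-wise bound to a bound on $\rho_{d_{\mathbb{E}^N},\mathbb{E}^N}(\mathbb{A}^f({\bf x}_0,k),\mathbb{U})$ by applying (\ref{hausdist}) in reverse, using Theorem \ref{attsol} to identify the family $\{\mathbb{A}_\alpha({\bf x}_0,k)\}_\alpha$ with the single fuzzy attainable set $\mathbb{A}^f({\bf x}_0,k)$ via the Stacking Theorem \ref{stacking}. Replacing $\epsilon$ by $\epsilon/N$ at the outset absorbs the coordinatewise sum in (\ref{dizero}) and delivers stability. Asymptotic stability follows by the same chain: once $\rho_{d_{\mathbb{R}^N},\mathbb{R}^N}(\mathbb{A}_0({\bf x}_0,k),[\mathbb{U}]^0) \to 0$ as $k \to \infty$, every level is trapped inside an arbitrarily small neighborhood of $[\mathbb{U}]^0$, so the supremum in $\alpha$ collapses to zero in the limit.

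The main obstacle lies in the final lifting step: the containment $\mathbb{A}_\alpha \subseteq [\mathbb{U}]^0 + \epsilon\mathbb{B}^N$ is only against the support of $\mathbb{U}$, not against the individual $\alpha$-levels $[\mathbb{U}]^\alpha$, so one must argue that nearness to $[\mathbb{U}]^0$ at every level is enough to conclude nearness in $\mathbb{E}^N$ in the sense of (\ref{hausdist}). The nestedness of the $\alpha$-levels of any fuzzy set in $\mathbb{E}^N$, together with the compactness guaranteed by Theorem \ref{attsol}, is what makes this lift go through without requiring any further regularity of ${\bf Q}$ in $\alpha$ beyond the continuity already assumed in (\ref{incl_fuzzy1}).
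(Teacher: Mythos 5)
Your proposal is correct and follows essentially the same route as the paper: reduce everything to the support-level inclusion (\ref{aaaa_di}), use the nesting $\mathbb{A}_\alpha({\bf x}_0,k) \subseteq \mathbb{A}_0({\bf x}_0,k)$ coming from ${\bf Q}([{\bf x}]^\alpha,k,\alpha)\subseteq {\bf Q}([{\bf x}]^0,k,0)$ together with Theorem \ref{attsol}, and then lift via the supremum characterization (\ref{hausdist}) of $d_{\mathbb{E}}$. You are in fact more explicit than the paper about the one delicate point --- that containment of every level in a neighborhood of $[\mathbb{U}]^0$, rather than of the individual $[\mathbb{U}]^\alpha$, must be converted into nearness in $\mathbb{E}^N$ --- a step the paper's own proof passes over entirely.
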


\begin{proof}
From Theorem \ref{attsol}, the FDI has a fuzzy attainable set $\mathbb{A}^f({\bf x}_0,k)$ which coincides with $[\mathbb{A}^f({\bf x}_0,k)]^0$, because the other level sets are nested in the support.

Since the difference inclusion (\ref{aaaa_di}) is stable (there exists a stable set $\mathbb{M}$), then  $\mathbb{A}^f({\bf x}_0,k)$ is bounded and the family (\ref{incl_fuzzy1}) is stable. 

If the inclusion (\ref{aaaa_di}) is asymptotically stable, then  
\begin{equation}
\rho_{d_{\mathbb{R}^N},\mathbb{R}^N}(\mathbb{A}_0({\bf x}_0,k),M)\rightarrow 0 \mbox{ as } k\rightarrow \infty
\end{equation}

Therefore, since $[\mathbb{A}^f({\bf x}_0,k)]^\alpha \subseteq \mathbb{A}^f({\bf x}_0,k)$ for $0\leq \alpha \leq 1$ it follows that
 \begin{equation}
 \rho_{d,\mathbb{R}^N}([\mathbb{A}^f({\bf x}_0,k)]^\alpha,M)\rightarrow 0 \mbox{ as }k\rightarrow \infty
 \end{equation}
 and the fuzzy system (\ref{incl_fuzzy1}) is asymptotically stable.
\end{proof}

The above theorem proves that the stability of the level-wise representation of the support of the fuzzy system (i.e., $\alpha=0$) is a sufficient condition for the stability of the whole fuzzy system.

We can now provide some further characterization in  the case of linear and stationary FDIs.

\section{Linear and Stationary Fuzzy Difference Inclusions}
In this section we will consider Linear and Stationary C-DFS, defined as follows:
\begin{equation}
\label{linfuzzysystemfull}
{\bf x}(k+1) = {\bf H}{\bf x}(k), \quad {\bf x}(0)={\bf x}_0
\end{equation}
where ${\bf H}$ is a $n\times n$ fuzzy valued matrix, i.e., $h_{ij}\in \mathbb{E}$ is a fuzzy number and ${\bf x},{\bf x}_0 \in \mathbb{E}^N$.
Let $[{\bf H}]^{\alpha}=[\underline {\bf H}_{\alpha}, \bar {\bf H}_{\alpha}]$ for any $\alpha \in [0,1]$ be an interval matrix, where
$\underline {\bf H}_\alpha, \bar {\bf H}_\alpha$ denote the matrices whose elements are, respectively, the lower and the upper end points of the interval.

The linear and stationary C-DFS (\ref{linfuzzysystemfull}) can be rewritten  as the FDI:
\begin{equation}
\label{linincl_fuzzy1}
{\bf x}^{\alpha}(k+1)\in [{\bf H}]^{\alpha}[{\bf x}(k)]^{\alpha}; \quad {\bf x}^{\alpha}(0)\in [{\bf x}_0]^\alpha; \quad  0\leq \alpha \leq 1
\end{equation}

It is possible to further specify the results of Theorem \ref{stabfuz}.

\begin{corollary}
\label{zerozero}
Let a Linear and Stationary FDI (\ref{linincl_fuzzy1}); then:
\begin{enumerate}
\item if $\forall U \in [{\bf H}]^0$ is stable then the fuzzy system (\ref{linincl_fuzzy1}) is stable.
\item if $\forall U \in [{\bf H}]^0$ is asymptotically stable then the fuzzy system (\ref{linincl_fuzzy1}) is stable.
\end{enumerate}
\end{corollary}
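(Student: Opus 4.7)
The plan is to invoke Theorem \ref{stabfuz} and reduce the claim to showing that the level-zero difference inclusion
\begin{equation*}
{\bf z}(k+1) \in [{\bf H}]^0 {\bf z}(k), \quad {\bf z}(0) \in [{\bf x}_0]^0
\end{equation*}
is stable (resp.\ asymptotically stable), with candidate stable set $\mathbb{M} = \{{\bf 0}\}$. The linearity and stationarity of the fuzzy system translate into the fact that $[{\bf H}]^0 = [\underline{{\bf H}}_0,\bar{{\bf H}}_0]$ is a compact interval matrix (each entry ranges over a compact real interval), so the set-valued map ${\bf Q}([{\bf x}]^0,k,0) = [{\bf H}]^0 [{\bf x}]^0$ sends compact convex sets to compact convex sets, satisfying the hypotheses of Theorem \ref{stabfuz}.

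Next, I would write a generic trajectory of the level-zero inclusion as
\begin{equation*}
{\bf z}(k) = U_{k-1} U_{k-2} \cdots U_0 {\bf z}_0, \qquad U_i \in [{\bf H}]^0, \; {\bf z}_0 \in [{\bf x}_0]^0,
\end{equation*}
and obtain a uniform bound on the products $U_{k-1}\cdots U_0$. For part (1), stability of each $U \in [{\bf H}]^0$ together with compactness of $[{\bf H}]^0$ yields a common norm in which $\|U\| \leq 1$ uniformly on the interval matrix, and hence $\|{\bf z}(k)\| \leq \|{\bf z}_0\|$ for every switching sequence; this gives $\rho^*_{d_{\mathbb{R}^N},\mathbb{R}^N}(\mathbb{A}_0({\bf x}_0,k),\{{\bf 0}\}) \leq \|{\bf z}_0\|$, which is the required stability. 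For part (2), asymptotic stability of every $U \in [{\bf H}]^0$ combined with continuity of the spectral radius on the compact set $[{\bf H}]^0$ gives $\rho^\star := \max_{U \in [{\bf H}]^0} \rho(U) < 1$, from which one may construct a common contractive norm so that $\|U\|\leq c < 1$ for all $U \in [{\bf H}]^0$; iterating, $\|{\bf z}(k)\| \leq c^k \|{\bf z}_0\| \to 0$ uniformly in the switching sequence. The conclusion for the full FDI (\ref{linincl_fuzzy1}) then follows directly from Theorem \ref{stabfuz}.

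The main obstacle, and the step I expect to be delicate, is precisely the jump from \emph{pointwise} stability of each $U \in [{\bf H}]^0$ to \emph{uniform} boundedness (or contraction) of arbitrary products $U_{k-1}\cdots U_0$: this is the classical joint-spectral-radius issue, since in general switched systems can be unstable even when every mode is individually stable. The cleanest way around it is to show, using the compactness of $[{\bf H}]^0$ and a robustness argument, the existence of a common Lyapunov function (equivalently, a common norm) that simultaneously certifies stability or contraction of every $U$ in the interval matrix; the remainder of the proof is then a routine Hausdorff-distance estimate feeding into Theorem \ref{stabfuz}.
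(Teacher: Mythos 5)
Your overall strategy is the same as the paper's: reduce to the level-zero inclusion and invoke Theorem \ref{stabfuz}. The difference is that the paper's proof simply asserts that stability of every $U\in[{\bf H}]^0$ makes the attainable set $[\mathbb{A}^f({\bf x}_0,k)]^0$ bounded, whereas you correctly isolate the step that actually carries the weight: passing from \emph{pointwise} (asymptotic) stability of each matrix in the interval to \emph{uniform} boundedness or contraction of arbitrary products $U_{k-1}\cdots U_0$ with time-varying selections $U_i\in[{\bf H}]^0$. Unfortunately, the resolution you propose for that step is not a valid deduction. Compactness of $[{\bf H}]^0$ together with continuity of the spectral radius does give $\rho^\star=\max_{U\in[{\bf H}]^0}\rho(U)<1$, but this does \emph{not} imply the existence of a common contractive norm (equivalently, a common Lyapunov function) for the whole family: the joint spectral radius of a compact set of matrices can exceed $\max_U\rho(U)$, and a set of individually Schur-stable matrices can generate unbounded switched products. (The textbook example is a pair of nilpotent matrices whose product has spectral radius greater than one; each member is asymptotically stable, yet periodic switching diverges.) The same objection applies to your part (1): a family of individually power-bounded matrices need not admit a single norm in which all of them are nonexpansive. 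So the sentence ``from which one may construct a common contractive norm'' is precisely the assertion that needs proof, and it is false at the stated level of generality.

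Two remarks on how this could be repaired. First, the paper itself, in Section 6, effectively adopts a \emph{frozen-coefficient} reading of the stationary inclusion, writing $\Phi_\alpha(k)=\{U^k:U\in[H]^\alpha\}$, i.e.\ the same $U$ is applied at every step. Under that reading your compactness argument does close: for each $U_0$ with $\rho(U_0)<1$ there is $k_0$ and a neighborhood on which $\|U^{k_0}\|<1$, a finite subcover then yields $\sup_{U}\sup_k\|U^k\|<\infty$ and geometric decay, and the corollary follows. If you intend the genuinely switched reading (which is what the paper's Definition of a solution of (\ref{eq:di}) literally says), then the hypothesis ``every $U\in[{\bf H}]^0$ is stable'' is not sufficient as it stands, and you would need a stronger assumption (joint spectral radius less than one, or an explicit common Lyapunov function, or the sign/Gershgorin structure exploited later in Corollary \ref{teolin1}, where $\bar{\bf H}_0\geq U\geq 0$ gives the monotone domination $\|U_{k-1}\cdots U_0\|\leq\|\bar{\bf H}_0^k\|$ that makes the switched products uniformly controlled). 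State explicitly which reading you are using; as written, the proof has a genuine gap at its central step.
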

\begin{proof}
If the first condition is verified, the attainable set $[\mathbb{A}^f({\bf x}_0,k)]^0$ is bounded and the inclusion 
(\ref{aaaa_di}) is stable.
From Theorem \ref{stabfuz}, we have that also the linear and stationary FDI (\ref{linincl_fuzzy1}) is stable.
For asymptotical stability note that if the second condition is verified, then the fuzzy System (\ref{linincl_fuzzy1}) is asymptotically stable, again by Theorem \ref{stabfuz}.
\end{proof}

The above corollary, therefore, requires the inspection of all the dynamic matrices in the interval $[{\bf H}]^0$ to prove the stability; under rather general hypotheses, the following result provides more operative necessary conditions.

\begin{corollary}
\label{teolin1}
Let a Linear and Stationary FDI  (\ref{linincl_fuzzy1}) and suppose that the crisp matrix
$\underline {\bf H}_0 \geq 0$ and that, for all $i=1, \ldots, N$ 
 \begin{equation}
 \label{gersh}
 \sum_{j=1; j\neq i}^N (\bar {\bf H}_0)_{ij}< 1 - (\bar {\bf H}_0)_{ii}
 \end{equation}
then the FDI (\ref{linincl_fuzzy1}) is asymptotically stable.
Analogously the result holds if $\bar {\bf H}_0 \leq 0$ and
 \begin{equation}
 \label{gersh2}
 \sum_{j=1; j\neq i}^N  (\underline {\bf H}_0)_{ij}> -1 - (\underline {\bf H}_0)_{ii}
 \end{equation}
 \end{corollary}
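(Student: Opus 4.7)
The plan is to reduce the claim to Corollary~\ref{zerozero}(2) by showing that every crisp matrix $U$ in the interval $[{\bf H}]^0$ is Schur stable, i.e., has spectral radius strictly less than one; asymptotic stability of the FDI then follows immediately.

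For the first case, I would begin by rewriting hypothesis (\ref{gersh}) in the equivalent compact form
\begin{equation*}
\sum_{j=1}^{N} (\bar{\bf H}_0)_{ij} < 1, \qquad i = 1, \dots, N,
\end{equation*}
i.e., every row sum of $\bar{\bf H}_0$ is strictly below one. Pick any $U \in [{\bf H}]^0$; since $0 \leq \underline{\bf H}_0 \leq U \leq \bar{\bf H}_0$ entry-wise, $U$ is nonnegative and $\sum_j u_{ij} \leq \sum_j (\bar{\bf H}_0)_{ij} < 1$ for every $i$. Invoking Gershgorin's disk theorem (as suggested by the label of (\ref{gersh})), each eigenvalue $\lambda$ of $U$ lies in some disk of center $u_{ii}$ and radius $\sum_{j\neq i} u_{ij}$, hence $|\lambda| \leq u_{ii} + \sum_{j\neq i} u_{ij} = \sum_j u_{ij} < 1$. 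Thus $\rho(U) < 1$, and the hypothesis of Corollary~\ref{zerozero}(2) is verified by every $U \in [{\bf H}]^0$.

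For the second case I would run the same argument on $-U$, which is nonnegative since $\underline{\bf H}_0 \leq U \leq \bar{\bf H}_0 \leq 0$; condition (\ref{gersh2}) rewrites as $\sum_{j=1}^N (-\underline{\bf H}_0)_{ij} < 1$ for each $i$, yielding $\sum_j |u_{ij}| \leq \sum_j (-\underline{\bf H}_0)_{ij} < 1$, hence $\rho(U)=\rho(-U)<1$ by the same Gershgorin bound applied to $-U$.

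The main subtle point---not so much an obstacle as an observation to flag---is that the set-valued setting, strictly speaking, requires more than pointwise Schur stability of each $U$: it requires uniform contractivity under arbitrary selection among matrices of $[{\bf H}]^0$. The strict inequalities in (\ref{gersh}) and (\ref{gersh2}), together with the compactness of $[{\bf H}]^0$, deliver a common bound $\|U\|_\infty \leq 1 - \varepsilon < 1$ valid for every admissible $U$; this uniform bound is what lets Corollary~\ref{zerozero}(2) propagate asymptotic stability from the individual crisp dynamics to the whole family, and it is precisely where the strict inequality in the Gershgorin-type hypotheses plays its essential role.
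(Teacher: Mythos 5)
Your proof is correct and follows essentially the same route as the paper: nonnegativity of every $U\in[{\bf H}]^0$ together with the Gershgorin row-sum bound inherited from $\bar {\bf H}_0$ (resp.\ $\underline {\bf H}_0$ in the nonpositive case) gives Schur stability of each crisp matrix, and Corollary~\ref{zerozero} then concludes. Your closing observation that the strict inequality yields a uniform bound $\|U\|_\infty\leq 1-\varepsilon$ over the whole interval matrix is a point the paper leaves implicit, but it refines rather than changes the argument.
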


\begin{proof}
Since $\underline {\bf H}_0 \geq 0$ it follows that for each matrix $U\in [H]_0$, $U \geq 0$; moreover $\bar {\bf H}_0\geq0$ is asymptotically stable, since by  Gershgorin circle Theorem \cite{CIRCLE}, its eigenvalues lie, in the complex plane, in the  union of circles centered in $\bar h^0_{ii}\in [0,1]$ with radius equal to $r^{i}_{\bar {\bf H}_0}=\sum_{j=1; j\neq i}^N  (\bar {\bf H}_0)_{ij}\leq 1 - (\bar {\bf H}_0)_{ii}$.

 Since $\bar {\bf H}_0\geq U$, for all $U\in [H]^0$, we have that 
 
 \begin{equation}
 \max_i (u_{ii}+r^i_U)\leq \max_i ((\bar {\bf H}_0)_{ii}+r^i_{\bar {\bf H}_0})< 1
 \end{equation} \noindent
  and  each matrix $U$ is asymptotically stable in the discrete-time sense. 
  According to Corollary \ref{zerozero} the fuzzy System (\ref{linincl_fuzzy1}) is stable.
  The proof for the negative case is analogous.
\end{proof}
Notice that the condition $\underline {\bf H}_0 \geq 0$  includes several systems of practical interest as, for example, the class of positive system whose Gershgorin circles are contained inside the unit ball. These systems find large applications in vague data applications (see, for instance, \cite{possystem1,possystem2}).

The following corollary applies to a generic linear and stationary FDI.
\begin{corollary}
\label{corfuck}
Let a Linear and Stationary FDI  (\ref{linincl_fuzzy1}), and consider the interval matrix $[{\bf H}]^0=[\underline {\bf H}_0, \bar {\bf H}_0]$.

Define ${\bf H}_{0c}=\frac{1}{2}(\underline {\bf H}_0+ \bar {\bf H}_0)$ and $\Delta {\bf H}=\frac{1}{2}(\bar {\bf H}_0- \underline {\bf H}_0)$.
Let $[r]^0=[\underline r_0, \bar r_0]$ and $[i]^0=[\underline i_0, \bar i_0]$ be defined as follows:

\begin{eqnarray}
\underline r_0= \min_{||x||_2=1}\{x^T{\bf H}_{0c}x-|x|^T\Delta {\bf H} |x| \}\\
\bar r_0= \max_{||x||_2=1}\{x^T{\bf H}_{0c}x-|x|^T\Delta {\bf H} |x| \}\\
\underline i_0= \min_{||[x_1,x_2]||_2=1}\{x_1^T({\bf H}_{0c}-{\bf H}_{0c}^T)x_2-\Delta {\bf H} \circ |x_1x_2^T- x_2x_1^T| \}\\
\bar i_0= \min_{||[x_1,x_2]||_2=1}\{x_1^T({\bf H}_{0c}+{\bf H}_{0c}^T)x_2-\Delta {\bf H} \circ |x_1x_2^T- x_2x_1^T| \}
\end{eqnarray}
where $A \circ B= \sum_i \sum_j a_{ij}b_{ij}$ is the scalar product of the matrices $A$ and $B$ and $[x_1,x_2]$ is the stack vector composed of $x_1$ and $x_2$ (where the transposition of vectors has been omitted for brevity).

If the following condition holds true
\begin{equation}
\label{condeig}
\max\{||\underline r_0+j \underline i_0||, ||\underline r_0+j \bar i_0||, ||\bar r_0+j \underline i_0||, ||\bar r_0+j \bar i_0||\}< 1
\end{equation}
where $j$ is the imaginary unit, then the FDI (\ref{linincl_fuzzy1}) is asymptotically stable.

\end{corollary}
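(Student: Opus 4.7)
The plan is to reduce the problem, via Corollary \ref{zerozero}, to showing that every crisp matrix $U\in[{\bf H}]^0$ is Schur stable under the hypothesis (\ref{condeig}), and then to prove this by bounding separately the real and imaginary parts of an arbitrary eigenvalue of $U$ using a Bendixson-type localization argument. First I would fix any $U\in[{\bf H}]^0$ and write $U={\bf H}_{0c}+E$, where $E$ is a real matrix satisfying $|E_{ij}|\leq(\Delta{\bf H})_{ij}$ componentwise by the very definition of ${\bf H}_{0c}$ and $\Delta{\bf H}$. For an arbitrary eigenvalue $\lambda=r+j i$ of $U$ with (possibly complex) unit eigenvector $v=x_1+jx_2$, I would recall Bendixson's theorem, which identifies $r$ with an eigenvalue of the Hermitian part $(U+U^T)/2$ and $i$ with an eigenvalue of $(U-U^T)/(2j)$.

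For the real part, the Rayleigh characterization of the Hermitian part gives $r\in[\min_{\|x\|_2=1} x^T Ux,\;\max_{\|x\|_2=1} x^T Ux]$. Substituting $U={\bf H}_{0c}+E$ yields $x^T Ux=x^T{\bf H}_{0c}x+x^TEx$, and the elementary bound $|x^TEx|\leq|x|^T|E|\,|x|\leq|x|^T\Delta{\bf H}\,|x|$ shows that $r\in[\underline r_0,\bar r_0]$ uniformly in $U\in[{\bf H}]^0$. For the imaginary part, writing $v=x_1+jx_2$ with $\|x_1\|_2^2+\|x_2\|_2^2=1$ and expanding $v^*Uv$ gives $i=x_1^TUx_2-x_2^TUx_1=x_1^T(U-U^T)x_2$; the analogous decomposition produces the perturbation term $x_1^T(E-E^T)x_2=E\circ(x_1x_2^T-x_2x_1^T)$, which, via the entrywise bound on $E$, is majorized in absolute value by $\Delta{\bf H}\circ|x_1x_2^T-x_2x_1^T|$. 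Taking extrema over $\|[x_1,x_2]\|_2=1$ then places $i\in[\underline i_0,\bar i_0]$, again uniformly in $U$.

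Finally, since $|\lambda|^2=r^2+i^2$ and the function $(r,i)\mapsto r^2+i^2$ is convex, on the rectangle $[\underline r_0,\bar r_0]\times[\underline i_0,\bar i_0]$ it attains its maximum at one of the four corners; by hypothesis (\ref{condeig}) this maximum is strictly less than $1$, so every eigenvalue of every $U\in[{\bf H}]^0$ lies strictly inside the unit disk. Applying Corollary \ref{zerozero} then yields the asymptotic stability of the FDI (\ref{linincl_fuzzy1}). The main obstacle I expect is the imaginary-part step: one must be careful in translating the Hermitian-part Rayleigh characterization to the bilinear form $x_1^T(U-U^T)x_2$ with the correct normalization $\|x_1\|_2^2+\|x_2\|_2^2=1$, and in checking that the Hadamard-product majorization $|E\circ(x_1x_2^T-x_2x_1^T)|\leq\Delta{\bf H}\circ|x_1x_2^T-x_2x_1^T|$ indeed covers both signs of the perturbation, so that taking extrema gives the two-sided interval $[\underline i_0,\bar i_0]$ appearing in the statement. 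The real-part analysis and the final corner-point argument are comparatively routine.
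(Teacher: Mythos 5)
Your proposal is correct in substance and reaches the conclusion by the same overall route as the paper: localize the eigenvalues of every $U\in[{\bf H}]^0$ in the rectangle $[\underline r_0,\bar r_0]\times[\underline i_0,\bar i_0]$, use condition (\ref{condeig}) to conclude every such $U$ is Schur stable, and invoke Corollary \ref{zerozero}. The difference is that the paper treats the eigenvalue localization as a black box, citing Rohn's bounds for interval matrices \cite{fuzeig} in one line, whereas you actually derive those bounds via the decomposition $U={\bf H}_{0c}+E$ with $|E|\leq\Delta{\bf H}$ and a Bendixson/Rayleigh argument for the Hermitian and skew-symmetric parts; your derivation of $\mathrm{Im}(\lambda)=x_1^T(U-U^T)x_2$ with $\|x_1\|_2^2+\|x_2\|_2^2=1$ and the majorization $|E\circ(x_1x_2^T-x_2x_1^T)|\leq\Delta{\bf H}\circ|x_1x_2^T-x_2x_1^T|$ is exactly the content of the cited result, so your version is self-contained where the paper's is not. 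Two remarks. First, your argument in fact proves the bounds only for the \emph{corrected} formulas: the upper bounds require a $+$ sign in front of the $|x|^T\Delta{\bf H}|x|$ and $\Delta{\bf H}\circ|\cdot|$ terms (and $\bar i_0$ should be a maximum of the skew form $x_1^T({\bf H}_{0c}-{\bf H}_{0c}^T)x_2$), whereas the statement as printed has minus signs and a ``min''; as literally written, $\bar r_0$ and $\bar i_0$ do not upper-bound the eigenvalues. You have silently repaired what is evidently a typo in the statement, which is the right reading but worth flagging. Second, your explicit convexity/corner-point argument for $r^2+i^2$ on the rectangle is a useful addition: the paper leaves that step implicit. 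Both your proof and the paper's inherit the same reliance on Corollary \ref{zerozero} (Schur stability of each matrix in the interval versus stability of the inclusion), so neither is more or less rigorous on that point.
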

\begin{proof}
In \cite{fuzeig} it is proven that, given a square interval matrix $[\underline {\bf H}_0, \bar {\bf H}_0]$, for each matrix ${\bf H}_{0i}\in [\underline {\bf H}_0, \bar {\bf H}_0]$ the eigenvalues $\lambda_{0ij}$ of ${\bf H}_{0i}$ are such that:
\begin{eqnarray}
\underline r_0 \leq Re[\lambda_{0ij}] \leq \bar r_0\\
\underline i_0 \leq Im[\lambda_{0ij}] \leq \bar i_0
\end{eqnarray}

Therefore, if condition (\ref{condeig}) holds true each matrix ${\bf H}_{0i}$ is asymptotically stable in the discrete-time sense. Hence, according to Corollary \ref{zerozero} the fuzzy System (\ref{linincl_fuzzy1}) is stable. 
\end{proof}

Notice that the above operative conditions are just sufficient conditions, and that such conditions do not address the case where the system is stable but not asymptotically stable, because these methods can not determine the multiplicity of the eigenvalues on the unit circle of the matrices in the interval $[H]^0$.

The following corollary provides a simple sufficient condition to determine the simple stability of the fuzzy system.

\begin{corollary}
Let a Linear and Stationary FDI  (\ref{linincl_fuzzy1}) and suppose that there exists a transform matrix $T$ such that

\begin{equation}
\label{gersh3}
T^{-1}\bar {\bf H}_0 T =\begin{bmatrix}
\bar {\bf H}^*_0 & \star \\
0 &1
\end{bmatrix} \quad \mbox{or} \quad T^{-1}\bar {\bf H}_0 T =\begin{bmatrix}
\bar {\bf H}^*_0 & 0 \\
\star &1
\end{bmatrix}
\end{equation}

\begin{itemize}
\item If the conditions of Corollary \ref{teolin1} hold, but inequality (\ref{gersh}) is not strictly satisfied (i.e., $ \sum_{j=1; j\neq i}^N (\bar {\bf H}_0)_{ij}\leq 1 - (\bar {\bf H}_0)_{ii}$) and  inequality (\ref{gersh}) holds strictly for the $N-1 \times N-1$ matrix $\bar {\bf H}^*_0$, then the FDI (\ref{linincl_fuzzy1}) is stable.
\item If the conditions of Corollary \ref{teolin1} hold, but inequality (\ref{gersh2}) is not strictly satisfied and condition (\ref{gersh3}) holds for $\underline {\bf H}_0$ (inequality (\ref{gersh2}) holds strictly for matrix $\underline {\bf H}^*_0$), then the FDI (\ref{linincl_fuzzy1}) is stable.
\item If the conditions of Corollary \ref{corfuck} hold for the interval matrix $[\underline {\bf H}^*_0, \bar {\bf H}^*_0]$ and condition (\ref{gersh3}) holds for both $\underline {\bf H}_0$ and $\bar {\bf H}_0$, then the FDI (\ref{linincl_fuzzy1}) is stable.
\end{itemize}
\end{corollary}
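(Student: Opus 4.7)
The plan is to follow the template of Corollaries \ref{zerozero}, \ref{teolin1}, and \ref{corfuck}: by Corollary \ref{zerozero} it suffices to prove that every $U\in[{\bf H}]^0=[\underline{\bf H}_0,\bar{\bf H}_0]$ is stable in the discrete-time sense, meaning $\rho(U)\le 1$ with every eigenvalue of modulus one being semisimple. What is new with respect to Corollary \ref{teolin1} is that the Gershgorin inequalities are only weak for the extremal matrices, so the argument must separately secure both spectral localization and the Jordan structure at the boundary point $1$.

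For the first bullet, I would first repeat the Gershgorin step from Corollary \ref{teolin1}: since $0\le u_{ij}\le(\bar{\bf H}_0)_{ij}$ for every $U\in[{\bf H}]^0$, each Gershgorin disk of $U$ sits on the nonnegative real axis with $u_{ii}+r^i_U\le(\bar{\bf H}_0)_{ii}+r^i_{\bar{\bf H}_0}\le 1$, so the spectrum of $U$ lies in the closed unit disk and meets the unit circle, if at all, only at $\lambda=1$. The block-triangular similarity (\ref{gersh3}) applied to $\bar{\bf H}_0$ then identifies the spectrum of $\bar{\bf H}_0$ as $\sigma(\bar{\bf H}^*_0)\cup\{1\}$, and the strict Gershgorin hypothesis on the $(N-1)\times(N-1)$ block $\bar{\bf H}^*_0$ confines $\sigma(\bar{\bf H}^*_0)$ to the open unit disk, making $1$ an algebraically simple eigenvalue of $\bar{\bf H}_0$. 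To propagate this to arbitrary $U$ in the interval, I would combine Perron--Frobenius monotonicity $\rho(U)\le\rho(\bar{\bf H}_0)=1$ with the invariant-subspace information provided by $T$: when $\rho(U)=1$, the eigenvalue $1$ is the Perron eigenvalue of the nonnegative matrix $U$, and the strict separation of $\sigma(\bar{\bf H}^*_0)$ from the unit circle forces the associated generalized eigenspace of $U$ to remain one-dimensional, so $1$ is semisimple for $U$ as well.

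The second bullet is symmetric, obtained by applying the same argument to $-\underline{\bf H}_0\ge 0$ and interpreting (\ref{gersh2}) as the nonpositive analogue of (\ref{gersh}). For the third bullet, the block-triangular forms for both $\underline{\bf H}_0$ and $\bar{\bf H}_0$ reduce the problem to the interval $[\underline{\bf H}^*_0,\bar{\bf H}^*_0]$ on the top-left block, to which Corollary \ref{corfuck} applies directly, keeping the spectra of all $\bar{\bf H}^*_0$-blocks of every $U$ strictly inside the unit disk; the corner entry $1$ contributes the remaining eigenvalue, simple by the same invariant-subspace argument. In every case stability of the FDI follows from Corollary \ref{zerozero}.

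The main obstacle is precisely the transfer of semisimplicity from the extremal matrix $\bar{\bf H}_0$ to every $U$ in the interval: Gershgorin confines the spectrum to the closed unit disk but says nothing about Jordan blocks on the boundary, and marginal Schur stability of $\bar{\bf H}_0$ does not automatically pass to a smaller nonnegative matrix. Closing this gap requires genuinely invoking Perron--Frobenius together with the invariant structure supplied by $T$, rather than pure Gershgorin reasoning, and it is the step that distinguishes this corollary from Corollaries \ref{teolin1} and \ref{corfuck}.
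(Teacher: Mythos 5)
Your route coincides with the paper's: Gershgorin localization of the spectrum of every $U\in[{\bf H}]^0$ in the closed unit disk, then the similarity (\ref{gersh3}) together with the strict Gershgorin bound on the reduced block $\bar {\bf H}^*_0$ to conclude that the unit-modulus eigenvalue of the extremal matrix is simple, and finally an appeal to Corollary \ref{zerozero}. Your handling of the localization step is in fact more careful than the paper's (which incorrectly asserts $Re[\lambda_i]\geq 0$; the disks are centered on the nonnegative axis but may extend into the left half-plane). The substantive issue is the step you yourself flag as the main obstacle: transferring simplicity of the boundary eigenvalue from $\bar {\bf H}_0$ to an arbitrary $U$ in the interval. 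The paper's proof does not address this at all -- it verifies the property for $\bar {\bf H}_0$ via triangularity and stops -- so you have correctly diagnosed a gap that the published argument shares. However, your proposed closure is asserted rather than proved: ``the strict separation of $\sigma(\bar {\bf H}^*_0)$ from the unit circle forces the associated generalized eigenspace of $U$ to remain one-dimensional'' does not follow from Perron--Frobenius domination, since $\rho(U)\leq\rho(\bar {\bf H}_0)$ says nothing about the multiplicity or Jordan structure of a boundary eigenvalue of $U$ when $\bar {\bf H}_0$ is not assumed irreducible. A gap-free argument is available and is purely Gershgorin-based: if some $U$ with $0\leq U\leq\bar {\bf H}_0$ has an eigenvalue $\lambda$ with $|\lambda|=1$, then $\lambda$ is a boundary point of the union of the Gershgorin disks of $U$ (all contained in the closed unit disk), so by the boundary form of Gershgorin's theorem every circle of $U$ passes through $\lambda$; with nonnegative entries this forces every row sum of $U$ to equal $1$, and since the row sums of $\bar {\bf H}_0\geq U$ are at most $1$ this forces $U=\bar {\bf H}_0$, to which the block-triangular hypothesis applies directly.

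Two further points. In the second bullet the corner entry of the transformed $\underline{\bf H}_0$ must be $-1$, not $1$ (the disks of a nonpositive matrix satisfying (\ref{gersh2}) touch the unit circle only at $-1$); this is an inconsistency inherited from the statement, but your ``nonpositive analogue'' phrasing should make it explicit. More seriously, in the third bullet both you and the paper assume that the similarity $T$ block-triangularizes every $U$ in the interval, whereas (\ref{gersh3}) is only hypothesized for the two extremal matrices; for an interior $U$, $T^{-1}UT$ need not have the zero block, so the reduction of its spectrum to $\sigma$ of a matrix in $[\underline {\bf H}^*_0,\bar {\bf H}^*_0]$ plus the corner eigenvalue is unjustified. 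That bullet requires either a common invariant-subspace assumption for the whole interval or a different argument; as written, neither your proof nor the paper's establishes it.
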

\begin{proof}
From Gershgorin circle Theorem \cite{CIRCLE} the eigenvalues of each matrix $U\in [{\bf H}]^0$ are such that $||\lambda_i||\leq 1$ and $Re[\lambda_i]\geq 0$.

In order to prove that the system is stable (but not asymptotically stable), we need to prove that the multiplicity of eigenvalues with $||\lambda_i||=1$ is exactly $1$. 

Clearly, under the first set of hypotheses matrix Corollary \ref{teolin1} holds strictly for the reduced matrix $\bar {\bf H}^*_0$, and therefore it has eigenvalues within the open unitary circle in the complex plane; due to the triangularity of the transformed matrix, the statement is verified. An analogous argument proves the statement for the other cases.\end{proof}

The following section describes the practical representation of a linear and stationary Discrete-time Fuzzy System, and is the discrete-time analogous of the results in \cite{Laksh:2003, Diamond}.

\section{Representation of  Linear Fuzzy Difference Inclusions}
\label{sec_fuzzy2}
In order to study the practical representation of a linear and stationary Discrete-time Fuzzy System
there is the need to discuss the solution set of System (\ref{linincl_fuzzy1}).
Consider a linear crisp system ${\bf z}(k+1)={\bf G}(k){\bf z}(k), {\bf z}(0)={\bf z}_0$. The solution of such system is given by
\begin{equation}
\label{eq:vc_crisp_sol}
{\bf z}(k)=\Phi(k){\bf z}_0
\end{equation}
where $\Phi(k)$ satisfies the matrix difference equation
\begin{equation}
\label{eq:phi_mdi}
\Phi(k+1)={\bf G}(k)\Phi(k); \quad \Phi(0)=I_N
\end{equation}
Clearly, if the system is also stationary, $\Phi(k)=G^k$.

Analogously to the crisp case, consider the fuzzy version of System (\ref{eq:vc_crisp_sol}); then, equation (\ref{eq:phi_mdi}) can be interpreted as the family of difference inclusions
\begin{equation}
\label{eq:vc_fuz}
\Phi_\alpha(k+1)\in [{\bf H}(k)]^\alpha\Phi_\alpha(k); \quad \Phi_\alpha(0)=I_N; \quad 0\leq \alpha \leq 1
\end{equation}

Denote
\begin{equation}
\Phi_	\alpha(k)=\{Y(k): Y(k+1)=U(k)Y(k), Y(0)=I, U(\cdot)\in [H]^\alpha(\cdot)\}
\end{equation}

From the Stacking Theorem \ref{stacking}, the $\Phi_\alpha(\cdot)$ are the level sets of the $\mathbb{D}^{N\times N}$ valued fuzzy function $\Phi(\cdot)$. If the matrix $H$ is stationary and $[H]^\alpha$ is an interval matrix for each $\alpha \in [0,1]$, then $\Phi(\cdot)$ is $\mathbb{E}^{N\times N}$ valued, since an interval matrix is a convex set in $\mathbb{R}^{N\times N}$.
In this case $U(k)=U$ and $Y(k)=U^k$, hence
\begin{equation}
\Phi_	\alpha(k)=\{U^k: U\in [H]^\alpha\}=[H^k]^\alpha
\end{equation}
In this case denote the fuzzy set $\Phi(k)$ as $H^k$.

In the following theorem it is proved that, if $\underline H_\alpha$ is nonnegative, then, for each level set, the evaluation of bounds of the set of solutions is simplified.

Notice that, within the proposed approach, the solution is not a single trajectory, but it is defined as a set; there is the need to consider, therefore, the attainable set and the set of solutions.
 
In order to derive a framework with a real applicability, the following theorem proves that, limiting the scope to the class of linear and stationary FDI with non-negative entries, the evaluation of the upper and lower bounds of the set of the solutions is extremely simplified.

\begin{theorem}
\label{pippo1}
Let a linear and stationary FDI (\ref{linincl_fuzzy1}) and suppose that the crisp matrix $\underline H^0 \geq 0$; then the following holds for each $\alpha \in [0,1]$:

\begin{equation}
\underline{\mathbb{S}}_\alpha({\bf x}_0,k)=(\underline{H}_\alpha)^k \underline{{\bf x}}_{\alpha0}; \quad \bar{\mathbb{S}}_\alpha({\bf x}_0,k)=(\bar{H}_\alpha)^k \bar{{\bf x}}_{\alpha0}
\end{equation}

\noindent where  $\underline{{\bf x}}_{\alpha0}$ and $\bar{{\bf x}}_{\alpha0}$ are the left and right bounds of $[{\bf x}_0]^\alpha$, respectively and $\underline{\mathbb{S}}_\alpha({\bf x}_0,k)$, $\bar{\mathbb{S}}_\alpha({\bf x}_0,k)$ are the left and right bounds of  the set of solutions $\mathbb{S}_\alpha({\bf x}_0,k)$ of the single Inclusion in the family (\ref{incl_fuzzy1}).
\end{theorem}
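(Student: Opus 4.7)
The natural strategy is induction on the time step $k$. The base case $k=0$ is immediate: the set of solutions reduces to $[{\bf x}_0]^\alpha$, whose left and right bounds are $\underline{{\bf x}}_{\alpha 0}$ and $\bar{{\bf x}}_{\alpha 0}$ respectively, matching the formulas since $(\underline{H}_\alpha)^0 = (\bar{H}_\alpha)^0 = I_N$.

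For the inductive step, any solution at step $k+1$ has the form ${\bf x}(k+1) = U\,{\bf x}(k)$ for some selector $U \in [H]^\alpha = [\underline{H}_\alpha, \bar{H}_\alpha]$ and some admissible ${\bf x}(k)$ lying between the inductive bounds $(\underline{H}_\alpha)^k \underline{{\bf x}}_{\alpha 0}$ and $(\bar{H}_\alpha)^k \bar{{\bf x}}_{\alpha 0}$. Because $\underline{H}^0 \geq 0$ and the $\alpha$-levels are nested inside $[H]^0$, every matrix in $[H]^\alpha$ is entrywise nonnegative; together with the positivity of the initial state $\underline{{\bf x}}_{\alpha 0} \geq 0$ implicit in the positive-system framework, the inductive lower trajectory is also nonnegative. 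Componentwise monotonicity of the map $(U, x) \mapsto Ux$ under joint nonnegativity then yields
\[
(\underline{H}_\alpha)^{k+1} \underline{{\bf x}}_{\alpha 0} \;\leq\; U\,{\bf x}(k) \;\leq\; (\bar{H}_\alpha)^{k+1} \bar{{\bf x}}_{\alpha 0},
\]
which gives the outer inclusion. Attainability is established by exhibiting two explicit admissible trajectories: the constant selector $U(\ell) \equiv \underline{H}_\alpha$ with ${\bf x}(0) = \underline{{\bf x}}_{\alpha 0}$ produces exactly $(\underline{H}_\alpha)^{k+1} \underline{{\bf x}}_{\alpha 0}$ at step $k+1$, and symmetrically $U(\ell) \equiv \bar{H}_\alpha$ with ${\bf x}(0) = \bar{{\bf x}}_{\alpha 0}$ achieves the upper bound; both trajectories belong to $\mathbb{S}_\alpha({\bf x}_0, k+1)$ since the chosen matrices and initial vectors lie in $[H]^\alpha$ and $[{\bf x}_0]^\alpha$ respectively.

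The principal conceptual point, and the step where I would spend the most care in a formal write-up, is the fact that the extremal trajectories are realized by \emph{stationary} selectors. Without nonnegativity, the componentwise minimum of $U{\bf x}(k)$ over selectors $U \in [H]^\alpha$ could in principle require alternating between distinct matrices at different time instants, and the closed-form power $(\underline{H}_\alpha)^{k+1}$ would not correspond to any single admissible solution of the difference inclusion. The hypothesis $\underline{H}^0 \geq 0$ combined with nonnegative initial data is precisely what collapses the combinatorial optimization over selector sequences into a single matrix power, and this is what makes the representation usable in practice.
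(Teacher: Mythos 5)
Your proof is correct and follows essentially the same route as the paper's: both rest on the fact that multiplication by entrywise-nonnegative matrices is order-preserving, so the extremal trajectories are generated by the constant selectors $\underline{H}_\alpha$ and $\bar{H}_\alpha$ — your induction merely makes explicit what the paper compresses into the observations that $G\geq R\geq 0$ implies $G^k\geq R^k$ and that ``the order of endpoints is preserved'' under interval multiplication. Your aside that the argument additionally needs $\underline{{\bf x}}_{\alpha 0}\geq 0$ is well taken: the theorem's statement does not assume it, the paper's proof uses it tacitly, and without it the claimed lower bound can fail (e.g., in the scalar case $[H]^\alpha=[1,2]$, $[x_0]^\alpha=[-1,1]$ gives attainable minimum $-2$ at $k=1$, not $\underline{H}_\alpha\,\underline{x}_{\alpha 0}=-1$).
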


\begin{proof}
We have to show that, given two matrices $R$ and $G$, with $G\geq R\geq0$ such that
\begin{eqnarray*}
X(k+1)=RX(k), \quad X(0)=I\\
Y(k+1)=GY(k), \quad Y(0)=I 
\end{eqnarray*}
it follows that $Y(k)\geq X(k), k\geq 0$. Clearly, $X(k)=R^k$ and $Y(k)=G^k$. Since $G\geq R\geq0$ the statement is proved.
Note that if $\underline {\bf H}_0 \geq 0$ then the level set matrices $\underline H^k_\alpha$ and $\bar H^k_\alpha$ are positive matrices for each $\alpha\in[0,1]$, and when multiplying interval vectors the order of endpoints is preserved.
We have therefore that $\mathbb{S}_\alpha({\bf x}_0,k)=[\underline{\mathbb{S}}_\alpha({\bf x}_0,k), \bar{\mathbb{S}}_\alpha({\bf x}_0,k)]$
and the theorem is proved.
\end{proof}

The theorem proves that, under the hypotheses, for each $\alpha$-level the upper and lower bound of the set of solutions can be calculated independently as if they were two crisp systems with dynamic matrices $\underline{H}_\alpha$ and $\bar{H}_\alpha$, respectively; therefore, for practical applications, it is sufficient to consider the bounds to adequately represent the solution set.
\section{Conclusions}
\label{sec_conclusions}

In this paper the stability and representation of linear and stationary fuzzy systems where both the dynamic coefficients and the state variables are described by means of fuzzy numbers is addressed, providing some stability conditions.

Further works will be devoted to find stability conditions for time varying and nonlinear fuzzy systems.

\end{document}